\begin{document}
\begin{CJK}{UTF8}{gbsn}

%
\title{\textit{BoolSkeleton}: Boolean Network Skeletonization \\ via Homogeneous Pattern Reduction}

\author{
\IEEEauthorblockN{Liwei~Ni$^{\star,}$\textsuperscript{1,2,5},}
\and
\IEEEauthorblockN{Jiaxi~Zhang\textsuperscript{3},}
\and
\IEEEauthorblockN{Shenggen~Zheng\textsuperscript{4},}
\and
\IEEEauthorblockN{Junfeng~Liu\textsuperscript{2},}
\and
\IEEEauthorblockN{Xingyu~Meng\textsuperscript{2},}
\and \\
\IEEEauthorblockN{Biwei~Xie\textsuperscript{1,5},}
\and
\IEEEauthorblockN{Xingquan~Li$^{\text{\Letter},}$\textsuperscript{2},}
 \and
\IEEEauthorblockN{and Huawei~Li\textsuperscript{1,5}}
\and \\

\IEEEauthorblockA{\textsuperscript{1}Institute of Computing Technology, Chinese Academy of Sciences, Beijing, China}  \\
\IEEEauthorblockA{\textsuperscript{2}Pengcheng Laboratory, Shenzhen, China}  \\
\IEEEauthorblockA{\textsuperscript{3}Peking University, Beijing, China}  \\
\IEEEauthorblockA{\textsuperscript{4}Quantum Science Center of Guangdong-Hong Kong-Macao Greater Bay Area, Shenzhen, China}  \\
\IEEEauthorblockA{\textsuperscript{5}University of Chinese Academy of Sciences, Beijing, China}  \\

Emails: \{nlwmode@gmail.com$^{\star}$, lixq01@pcl.ac.cn$^{\text{\Letter}}$\} 
}

\maketitle

\begin{abstract}
Boolean equivalence allows Boolean networks with identical functionality to exhibit diverse graph structures.
This gives more room for exploration in logic optimization, while also posing a challenge for tasks involving consistency between Boolean networks.
To tackle this challenge, we introduce \textit{BoolSkeleton}, a novel Boolean network skeletonization method that improves the consistency and reliability of design-specific evaluations.
\textit{BoolSkeleton} comprises two key steps: preprocessing and reduction. 
In preprocessing, the Boolean network is transformed into a defined Boolean dependency graph, where nodes are assigned the functionality-related status.
Next, the homogeneous and heterogeneous patterns are defined for the node-level pattern reduction step. 
Heterogeneous patterns are preserved to maintain critical functionality-related dependencies, while homogeneous patterns can be reduced.
Parameter $K$ of the pattern further constrains the fanin size of these patterns, enabling fine-tuned control over the granularity of graph reduction.
To validate \textit{BoolSkeleton}’s effectiveness, we conducted four analysis/downstream tasks around the Boolean network: compression analysis, classification, critical path analysis, and timing prediction, demonstrating its robustness across diverse scenarios.
Furthermore, it improves above 55\% in the average accuracy compared to the original Boolean network for the timing prediction task.
These experiments underscore the potential of \textit{BoolSkeleton} to enhance design consistency in logic synthesis.
\end{abstract}

\begin{IEEEkeywords}
Boolean network, skeleton, Boolean dependency, pattern reduction, logic synthesis
\end{IEEEkeywords}

%
\IEEEpeerreviewmaketitle

\section{Introduction}
\label{sec:intro}
\IEEEPARstart{B}{oolean} networks~\cite{multi-level} serve as the intermediate representation in the logic synthesis process~\cite{synthesis_book} within Electronic Design Automation~(EDA), where they can be modeled as a typical computational graph. 
For any given Boolean network, it comprises two key components: ``static'' functionality and ``dynamic'' Directed Acyclic Graph (DAG) structure. 
Here, ``static'' denotes the functionality that remains invariant for a given design, whereas ``dynamic'' reflects the variability of the local DAG structure. 
This variability arises due to the Boolean equivalence theorem~\cite{boolean_equivalence_iccad02}, which posits that Boolean networks with identical functionality can produce diverse DAG structures as a result of logic optimization techniques.
Logic optimization~\cite{synthesis_book} operators aim to reduce the Boolean network's size and depth by the local equivalent replacement techniques~\cite{AIG-2006-rewrite4, refactor_06}, thereby enhancing the efficiency of subsequent EDA steps.
However, this ``dynamic'' flexibility poses challenges for functionality-related graph embedding learning in logic synthesis, such as classification~\cite{classification} and Boolean matching~\cite{boolean_matching_iccad21}.
The variability introduced by optimization complicates the maintenance of consistent representations, creating a tension between structural variance and functional consistency within Boolean networks.

Graph Neural Networks (GNNs) offer a robust framework for learning graph embeddings, effectively extracting consistent features from Boolean networks~\cite{graph_embedding_survey}. 
Several studies have harnessed GNNs for graph embedding tasks in this domain, including DeepGate and its variants~\cite{deepgate, deepgate2, deepgate3}, HOGA~\cite{HOGA}, PolarGate~\cite{polargate}, BoolGebra~\cite{boolgebra}, etc. 
These approaches, however, primarily depend on fine-grained topological structures to represent Boolean networks, placing considerable demands on the expressive capacity of GNNs to capture coarse-grained features. 
While traditional graph coarsening techniques, such as Variations~\cite{skeleton_variations}, Heavy Edge Matching~\cite{skeleton_heavy_edge_matching}, Algebraic Distance~\cite{skeleton_algebraic_distance}, Affinity~\cite{skeleton_affinity}, and Kron Reduction~\cite{skeleton_kron_reduction}, excel at deriving high-dimensional abstractions, their direct application to Boolean networks requires further adaptation. 
This stems from the unique structural and functional properties of Boolean networks, which differ from conventional graphs.
Therefore, there is a critical need to advance these methods by incorporating a global perspective. 
This requires innovative strategies that harmoniously balance local and global feature learning while aligning with the inherent properties.

To address these challenges, we first conduct an in-depth analysis of Boolean networks, uncovering key attributes that define their behavior: Boolean dependency, reachability, reconvergence, and the inherent tension between ``static'' and ``dynamic'' characteristics.
Based on these insights, we propose \textit{BoolSkeleton}, a novel Boolean network skeletonization method that employs homogeneous pattern reduction to balance these attributes while preserving coarse-grained information. 
\textit{BoolSkeleton} consists of two primary phases: preprocessing and reduction.
In preprocessing, the Boolean network is transformed into a Boolean dependency graph, with functionality-related node statuses initialized to reflect their dependencies.
Then, the heterogeneous and homogeneous patterns are defined to facilitate the reduction step.
Heterogeneous patterns can preserve the functionality-dependent structures, while homogeneous patterns enable node reduction to coarsen the graph.
An iterative reduction algorithm, guided by the topological order of the Boolean dependency graph, is then applied to eliminate homogeneous patterns systematically.
We evaluate \textit{BoolSkeleton} across several analysis and downstream tasks: compression, classification, critical path analysis, and timing prediction.
The compression analysis evaluates the network coarsening ratio; the classification analysis validates the consistency; and the critical path analysis task demonstrates its superior profiling capability. Moreover, \textit{BoolSkeleton} achieves over 55\% improvement in average accuracy compared to the original Boolean network in the timing prediction task. These experimental findings highlight the significant potential of \textit{BoolSkeleton} in enhancing the consistency and reliability of Boolean network analysis.
The contributions can be summarised as follows:
\begin{itemize}
    \item We introduce \textit{BoolSkeleton} to coarsen the Boolean network by leveraging the node-level homogeneous pattern reductions. To the best of our knowledge, this is the first work to study the skeleton problem on Boolean networks.
    \item We provide a comprehensive analysis of Boolean networks, identifying key attributes: Boolean dependency, reachability, reconvergence, and the tension between ``static'' and ``dynamic'' attributes.
    \item \textit{BoolSkeleton} can well balance the local functional structure of Boolean networks with the coarse-grained skeleton, overcoming the over-reliance on the fine-grained structure of Boolean networks.
    \item We demonstrate the effectiveness of \textit{BoolSkeleton} by multiple downstream tasks, achieving significant improvements in classification and timing prediction accuracy.
\end{itemize}

The rest of this paper is organized as follows: 
\cref{sec:background} provides an overview of the background and motivation; 
\cref{sec:problem} elaborates on the problem statement; 
\cref{sec:method} details the proposed \textit{BoolSkeleton}; 
\cref{sec:task} presents experimental results for the downstream tasks; 
\cref{sec:discussion} gives the discussion; 
and \cref{sec:conclusion} summarizes the conclusions.

\section{Background and Motivation}
\label{sec:background}

In this section, we will introduce the background of Boolean network and the motivation of this work. 

\vspace{-1em}
\subsection{Background}
\label{sec:background:background}

\subsubsection{Boolean Network}

\begin{figure}[!thb]

    \centering
    \includegraphics[width=0.48\textwidth]{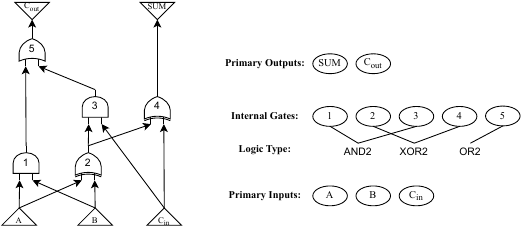}
    \caption{The visualization of a Boolean network for a full adder, where the boolean expression of $SUM$ and $C_{out}$ can be formulated by ``$SUM=C_{in} \oplus (A \oplus B)$, $C_{out}=\big{(}A \wedge B\big{)} \vee \big{(}C_{in} \wedge (A \oplus B)\big{)}$'', respectively.}
    \label{fig:boolean_network}
    \vspace{-1em}
\end{figure}

A Boolean network, denoted as \(\mathcal{C}\), can be modeled as a computational graph which consists of functionality and the gate-based DAG. 
Formally, let \(\mathcal{C} = (\mathcal{V}, \mathcal{E})\), where \(\mathcal{V}\) is the set of nodes and \(\mathcal{E}\) is the set of edges. 
The node set is partitioned as \(\mathcal{V} = \mathcal{V}^{I} \cup \mathcal{V}^{G} \cup \mathcal{V}^{O}\), with \(\mathcal{V}^{I}\) denoting the set of \textit{Primary Input} nodes (PIs), \(\mathcal{V}^{O}\) the set of \textit{Primary Output} nodes (POs), and \(\mathcal{V}^{G}\) the set of internal \textit{logic gates}. 
Edges in \(\mathcal{E}\) refers to the signal propagation: an edge \(v_i \to v_j \in \mathcal{E}\) (where \(v_i, v_j \in \mathcal{V}\)) indicates that \(v_i\) is a \textit{fanin} of \(v_j\), and equivalently, \(v_j\) is a \textit{fanout} of \(v_i\).

For a given node \(v_j\), the set of all its fanin nodes forms its \textit{Transitive FanIn-cone}~(TFI), while the set of all its fanout nodes constitutes its \textit{Transitive FanOut-cone}~(TFO).
For the DAG component of the Boolean network \(\mathcal{C}\) with \(n\) nodes, we define \(\mathbf{A}^{n \times n}\) as the Boolean adjacency matrix, where \(\mathbf{A}_{i,j} = \text{true}\) if there exists an edge \(v_i \to v_j\). 
Additionally, we define \(\mathbf{R}^{n \times n}\) as the reachability matrix, where \(\mathbf{R}_{i,j} = \text{true}\) if there exists a path from \(v_i\) to \(v_j\).
The \textit{depth} of node $v_i \in \mathcal{V}$ is determined using the unit delay model~\cite{unit_delay_iccad96}:
\begin{equation}
\small
\begin{aligned}
\text{depth}(v_i) = 
\begin{cases} 
0, & v_i \in \text{PIs}, \\ 
\max \big{(} \text{depth}_{(\forall v_j \in fanin(v_i))}(v_j)\big{)}  + 1, & \text{otherwise},
\end{cases}
\end{aligned}
\label{eq:unit_delay_model}
\end{equation}
Logic gates within the Boolean network can be constructed from functionally complete sets, such as \{\text{AND2}, \text{INVERTER}\}, \{\text{XOR2}, \text{AND2}, \text{INVERTER}\}, etc.
Furthermore, any superset of a functionally complete set retains functional completeness.
The structure of a Boolean network representing a full adder is illustrated in \cref{fig:boolean_network}.

\subsubsection{Boolean Equivalence}

The Boolean equivalence theory~\cite{boolean_equivalence_iccad02} asserts that Boolean networks with the same functionality can have different graph structures, in other words, Boolean networks with different graph structures can lead to the same functionality.

\begin{figure}[t]
    \centering
    \subfigure[$\mathcal{C}_1:f=(A \wedge B) \wedge (C \wedge D)$]{
    \includegraphics[width=0.2\textwidth]{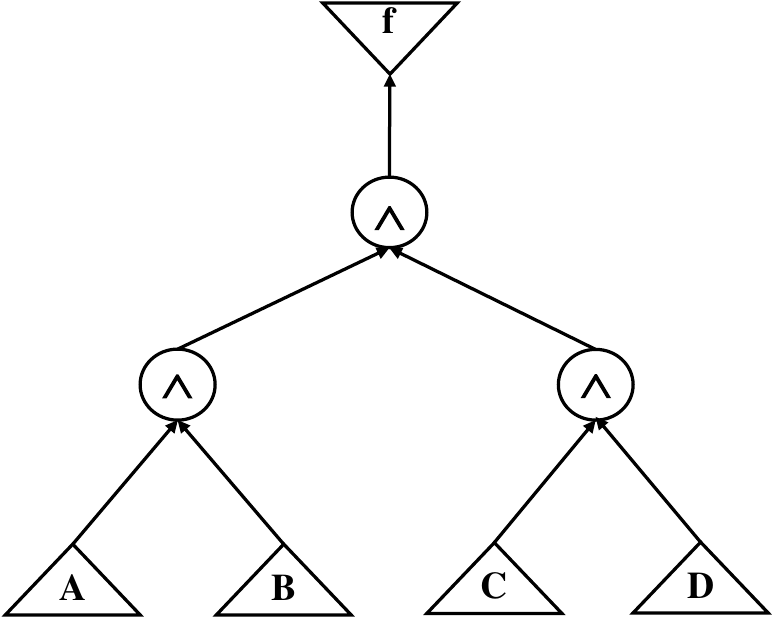}
    }
    \hfill
    \subfigure[$\mathcal{C}_2:g =((A \wedge B) \wedge C) \wedge D$]{
    \includegraphics[width=0.2\textwidth]{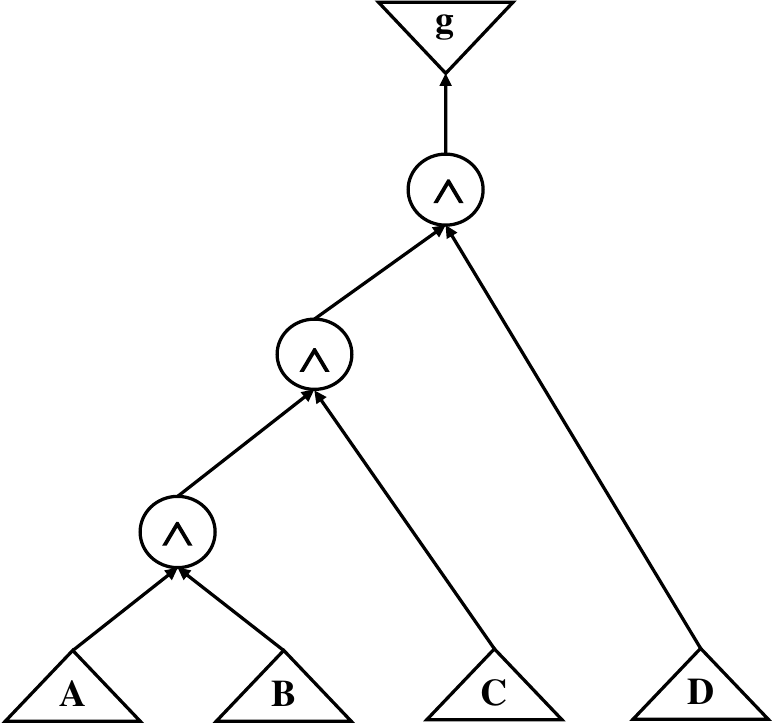}
    }
    \caption{The illustration of the Boolean equivalence: the Boolean expression of $f$ in (a) and $g$ in (b).}
    \label{fig:boolean_equivalence}
    \vspace{-1em}
\end{figure}

\begin{lemma}
\label{lemma:boolean_equivalence}
According to Boolean equivalence, given a pair of Boolean networks ($\mathcal{C}_1$, $\mathcal{C}_2$), we can say that $\mathcal{C}_1$ and $\mathcal{C}_2$ are Boolean equivalent under the following \nlwdel{three}\nlwadd{two} conditions:

\begin{equation}
\small
\begin{aligned}
\text{conditon1:}~&~\big{(}\mathcal{C}_1^{G} \equiv \mathcal{C}_2^{G}\big{)}    &\Rightarrow~~(\mathcal{C}_1 \doteq \mathcal{C}_2); \\
\text{conditon2:}~&~\big{(} (\mathcal{C}_1^{G} \neq   \mathcal{C}_2^{G}) \land (\mathcal{C}_1^{F} \equiv \mathcal{C}_2^{F}) \big{)}  &\Rightarrow~~(\mathcal{C}_1 \doteq \mathcal{C}_2),
\end{aligned}
\end{equation}
where $\doteq$ denotes as the Boolean equivalent operator, $\mathcal{C}^{G}$ refers to the DAG structure of Boolean network $\mathcal{C}$, and $\mathcal{C}^{F}$ refers to the functionality.
\end{lemma}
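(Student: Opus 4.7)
The plan is to reduce the lemma to the definition of the equivalence operator $\doteq$, which I will take to mean that two Boolean networks $\mathcal{C}_1$ and $\mathcal{C}_2$ are equivalent whenever their primary outputs realize the same Boolean function of the primary inputs. With this reading, the statement splits naturally into the two listed cases, and each case corresponds to one of the two ``directions'' of the Boolean equivalence theorem cited from \cite{boolean_equivalence_iccad02}.

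For Condition~1 (identical DAG structures $\mathcal{C}_1^{G}\equiv\mathcal{C}_2^{G}$), I would argue by structural induction on the topological order of the shared DAG. The base case covers PIs: since $\mathcal{V}_1^{I}$ and $\mathcal{V}_2^{I}$ are in bijection by the structural identity, each corresponding pair of inputs carries the same signal. For the inductive step, if every node $v_j$ of depth less than $d$ computes the same function in both networks, then a node $v_i$ of depth $d$ is determined by its gate type and its fanin set (both inherited from $\mathcal{C}^{G}$), so it computes the identical function in $\mathcal{C}_1$ and $\mathcal{C}_2$. Propagating this to the POs gives $\mathcal{C}_1^{F}\equiv\mathcal{C}_2^{F}$, hence $\mathcal{C}_1 \doteq \mathcal{C}_2$.

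For Condition~2, the hypothesis $\mathcal{C}_1^{F}\equiv\mathcal{C}_2^{F}$ is exactly the definition of functional equivalence used above, so $\mathcal{C}_1 \doteq \mathcal{C}_2$ follows immediately; the additional premise $\mathcal{C}_1^{G}\neq\mathcal{C}_2^{G}$ is only there to emphasize that structural divergence is permissible and does not enter the argument. It is worth noting that Conditions~1 and~2 together do not form an ``if and only if'' characterisation, but a sufficient partition: together they assert that either sharing the DAG or sharing the functionality suffices for $\doteq$.

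The only real subtlety, and the step I would spend the most care on, is the inductive argument in Condition~1, because one must be explicit about what ``$\mathcal{C}_1^{G}\equiv\mathcal{C}_2^{G}$'' means: not merely graph-isomorphism of the underlying DAG, but an isomorphism that preserves gate labels on $\mathcal{V}^{G}$ and the input/output partition of $\mathcal{V}$. Once this labelled-isomorphism reading is fixed, the induction goes through mechanically using the unit-delay topological order of \cref{eq:unit_delay_model}, and both conditions collapse to one-line consequences of the definition of $\doteq$.
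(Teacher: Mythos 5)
Your proposal is correct and follows essentially the same route as the paper: the paper justifies Condition~1 by the informal observation that ``the same topology with the same node types yields the same Boolean function'' (which your labelled-isomorphism induction merely formalizes), and treats Condition~2 as immediate from the definition of functional equivalence, exactly as you do. The only difference is that you make the induction and the labelled-isomorphism reading of $\mathcal{C}_1^{G}\equiv\mathcal{C}_2^{G}$ explicit, which the paper leaves implicit.
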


\cref{lemma:boolean_equivalence} delineates two conditions under which Boolean networks \(\mathcal{C}_1\) and \(\mathcal{C}_2\) are considered equivalent:
\begin{enumerate}
    \item Identical DAG Structures: If \(\mathcal{C}_1^{G} \equiv \mathcal{C}_2^{G}\), identical graph structures imply equivalent functionality, as the same topology consistently with the same node types yields the same Boolean function;
    \item Distinct DAG Structures with Identical Functions: If \(\mathcal{C}_1^{G} \neq \mathcal{C}_2^{G}\) yet \(\mathcal{C}_1^{F} \equiv \mathcal{C}_2^{F}\), equivalence holds despite structural differences, representing a special case of Cond-1.
\end{enumerate}
\cref{fig:boolean_equivalence} illustrates the scenario corresponding to Condition \nlwdel{3}\nlwadd{2} of Boolean equivalence.
For instance, consider two functions \(f\) and \(g\) such that, after logic equivalence transformation, both simplify to \(f = g = A \land B \land C \land D\), thus, \(\mathcal{C}_1 \doteq \mathcal{C}_2\).




\subsection{Motivation}
\subsubsection{Structural Bias}

\begin{figure}[t]
\vspace{-1em}
\centering
\subfigure[size and depth distr.]{
\includegraphics[width=0.2\textwidth]{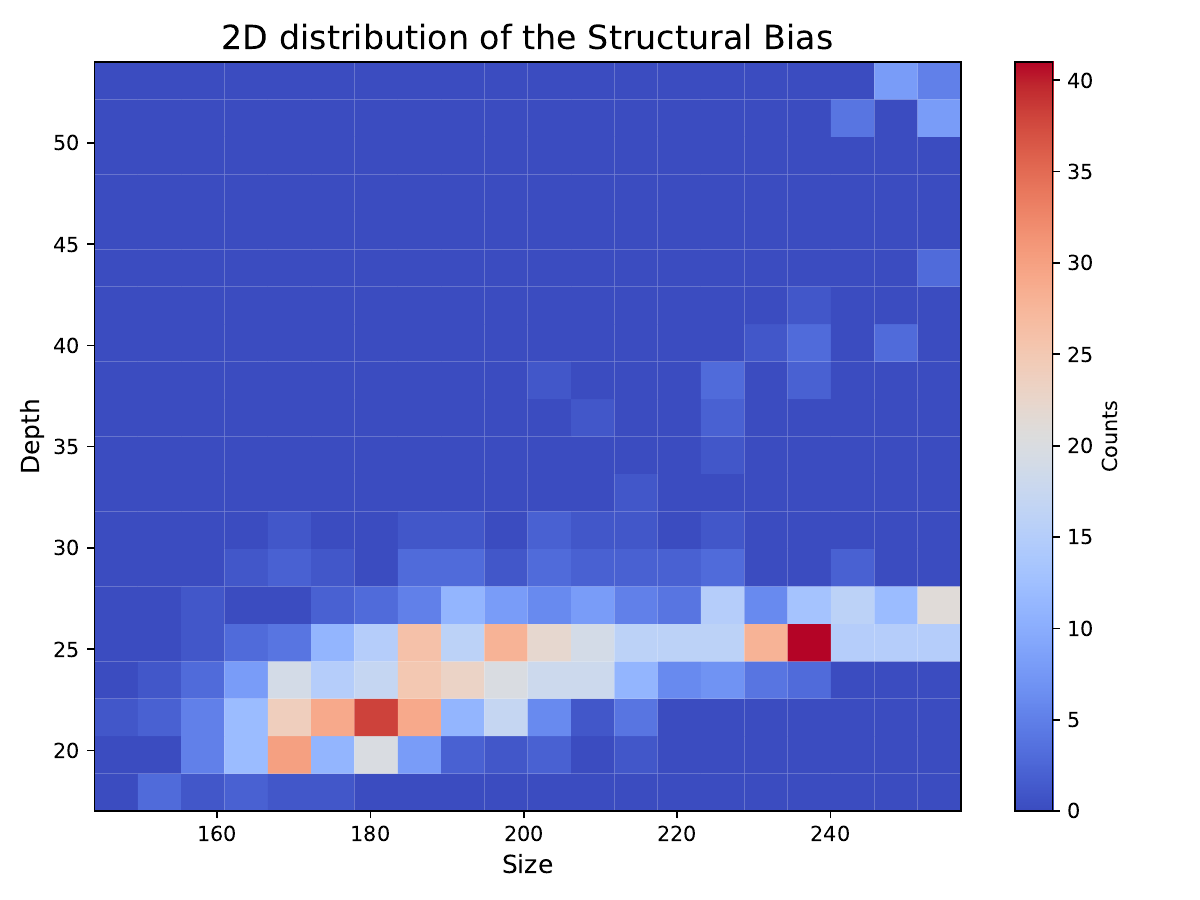}
}
\subfigure[\nlwdel{average degree distr}\nlwadd{degree distr}.]{
\includegraphics[width=0.2\textwidth]{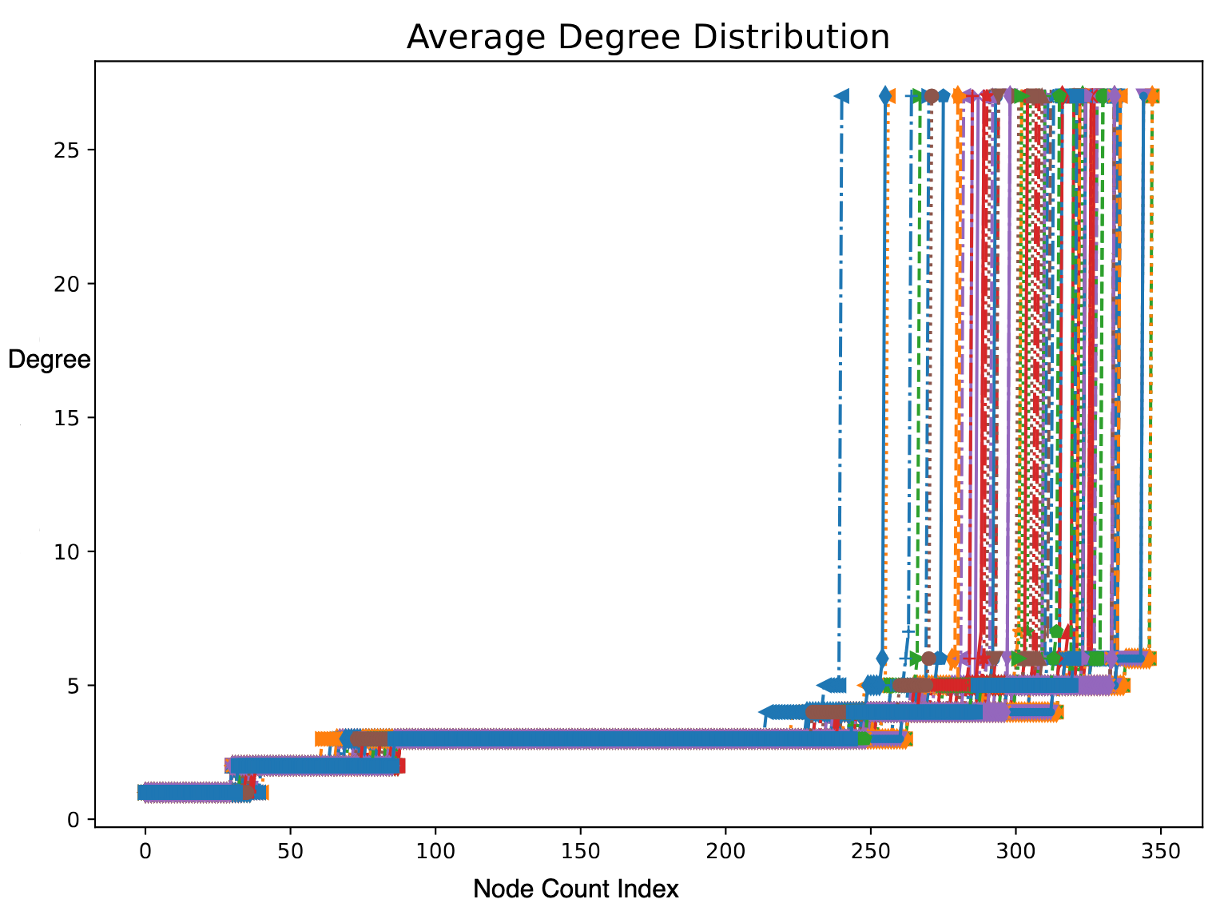}
}
\caption{The structural bias example of \textit{router} design.}
\label{fig:motivation:bias}
\vspace{-1em}
\end{figure}

According to Boolean equivalence theory, distinct graph structures within a Boolean network can yield identical functionality, introducing structural bias~\cite{bias_influence} specific to a given design. 
\cref{fig:motivation:bias} illustrates this structural bias in the design ``router'', demonstrating that variants differing in size, depth, or sorted node degree\footnote{Sorted node degree of a graph can avoid difference by calculating orders} distribution exhibit the same functional behavior. 
Consequently, this equivalence across variants incurs a computational overhead when verifying equivalence to assess structural bias.

\subsubsection{Skeleton}

\begin{figure}[t]
    \centering
    \subfigure[Human Skeleton.]{
    \includegraphics[width=0.2\textwidth]{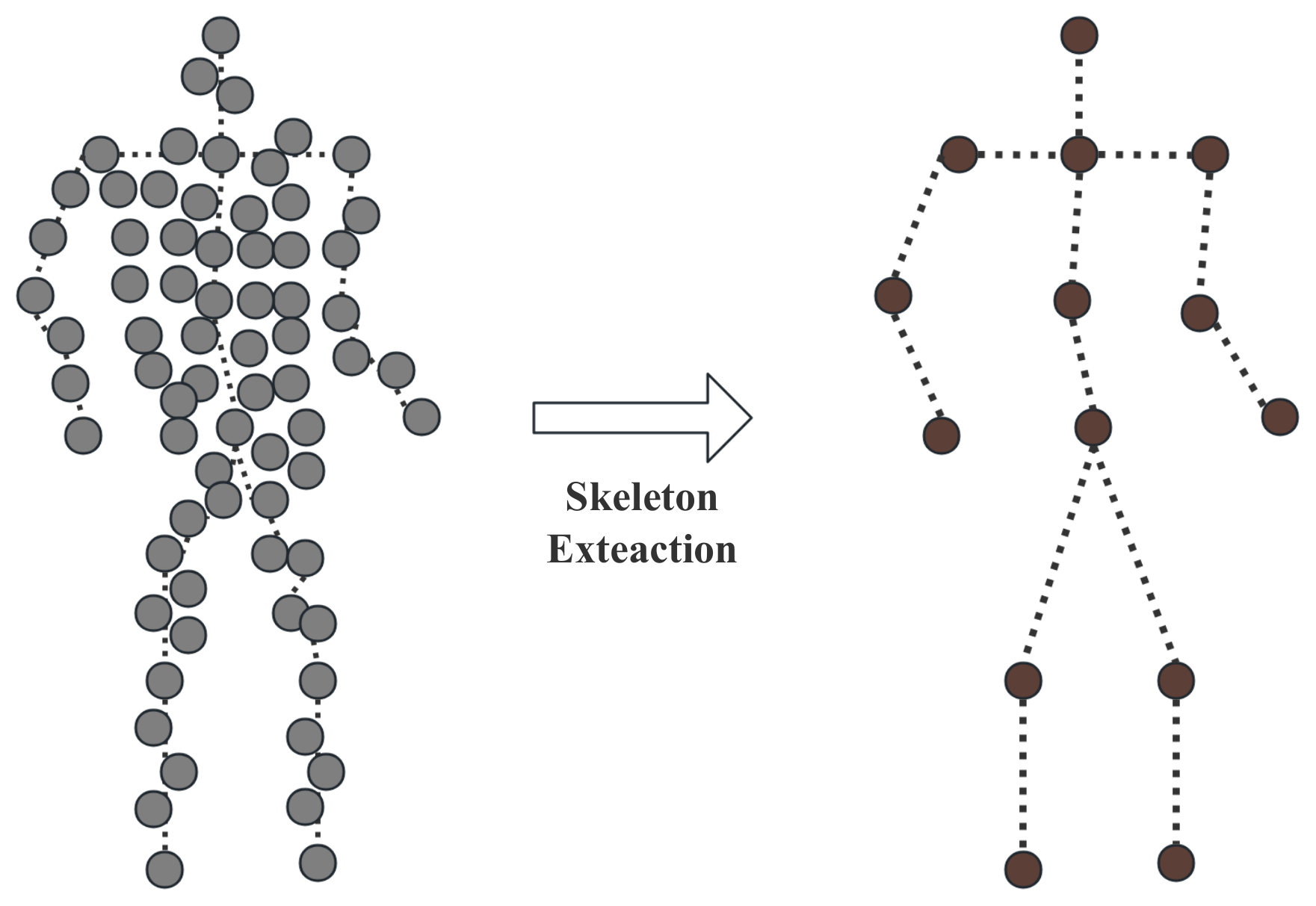}
    }
    \subfigure[Boolean network Skeleton?]{
    \includegraphics[width=0.25\textwidth]{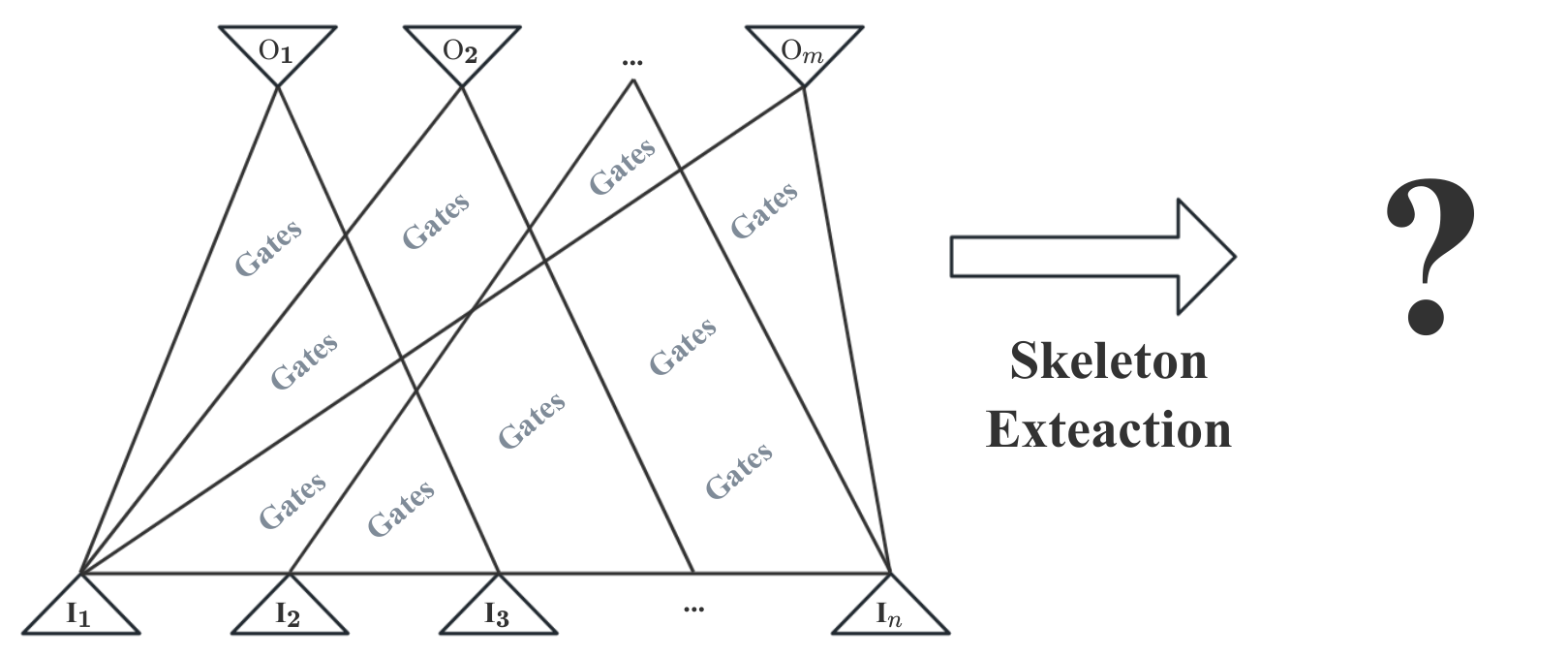}
    }
    \caption{The motivation of the Boolean network skeleton.}
    \label{fig:motivation:skeleton}
\end{figure}


In graph theory, graph reduction~\cite{graph_skeleton, graph_pooling_survery_ijcai23} refers to a technique that reduces the number of nodes or edges in a graph while striving to preserve its fundamental structure and properties. 
This approach can enhance the quality and efficiency of graph-level tasks. 
For instance, \cref{fig:motivation:skeleton} illustrates how the skeleton of a human graph simplifies pose-related tasks by abstracting the essential structure.

In logic synthesis, tasks such as verification and critical path prediction are often hindered by redundant nodes and edges, as well as structural bias. 
Structural bias introduces inconsistencies in graph structure, whereas graph skeletonization aims to abstract the original graph into a more concise representation. 
By integrating the concepts of structural bias and graph skeletonization, it is evident that an abstraction of Boolean networks is essential to improve the performance of related tasks.

\subsection{Related Work}

In graph theory, the skeleton reduction method~\cite{graph_skeleton} offers a robust approach to distill coarse-grained information from a graph while preserving its essential attributes.
Loukas introduced a graph reduction method with spectral and cut guarantees, linking approximation quality to graph properties like degree and eigenvalue distributions~\cite{skeleton_variations}.
Loukas et.al used heavy edge matching for spectral approximations, showing coarse eigenvectors can approximate clustering without refinement~\cite{skeleton_heavy_edge_matching}.
Ron et al. proposed relaxation-based coarsening using algebraic distances for multiscale graph organization~\cite{skeleton_algebraic_distance}.
Livne et al. developed the Lean Algebraic Multigrid (LAMG) method, leveraging node affinity for scalable Laplacian solving~\cite{skeleton_affinity}.
Dörfler et al. introduced Kron reduction for electrical networks, preserving connectivity in reduced graphs~\cite{skeleton_kron_reduction}.
However, Boolean networks differ fundamentally from traditional graphs, presenting a unique challenge: how to adapt skeleton extraction techniques to Boolean networks from the perspective of logic synthesis. 

\section{Problem Formulation and Analysis}
\label{sec:problem}
In this section, we will formally define the Boolean network skeleton problem. 
Then, we analyze the critical factors that substantially impact the efficacy of prospective solutions.

\subsection{Problem Formulation}
\label{sec:problem:formulation}

\begin{definition}[Boolean dependency]
Given a Boolean network $\mathcal{C}$, we say that node $b$ is Boolean dependent on node $a$ if there exists a path from $a$ to $b$ in the DAG and a Boolean function $f$ such that $b = f(a, \dots)$, and the value of $b$ can be determined by the value of $a$\nlwnew{, its negation, or a constant.}.
\label{def:dependency}
\end{definition}
Boolean dependency is based on the concept of reachability in graph theory, a key element for comprehending the flow of information within a DAG.
This dependency underscores functionality, setting it apart from conventional node dependencies in DAGs.
To support graph skeletonization in this work, we introduce the following definition of a Boolean dependency graph tailored for graph operations.

\begin{definition}[Boolean dependency graph]
Boolean dependency graph $\mathcal{G}$ is derived from the Boolean network, where the relationship between nodes refers to Boolean dependency.
\label{def:dependency_graph}
\end{definition}

The Boolean network skeleton problem is based on Boolean dependency graph, and it can be described by:
For any given Boolean network $\mathcal{C}$, the Boolean skeleton problem can be defined as the reduction function \textbf{BNetworkSkeletonize} that maps $\mathcal{C}=(\mathcal{V}, \mathcal{E})$ to its corresponding reduced Boolean dependency graph $\mathcal{G}=(\mathcal{V'}, \mathcal{E'})$: 
\begin{equation}
\begin{aligned}
\mathcal{G} &\gets \textbf{BNetworkSkeletonize}(\mathcal{C}), \\
            & \Rightarrow \mathcal{G} \approx \mathcal{C},
\label{eq:problem_formulation}
\end{aligned}
\end{equation}
where $|\mathcal{V'}| \leq |\mathcal{V}|, |\mathcal{E'}| \leq |\mathcal{E}|$, and the ``$\approx$'' symbol means $\mathcal{G}$ still retains the essential abstraction information of $\mathcal{C}$.

\subsection{Analysis}
\label{sec:problem:analysis}
Following the problem formulation in \cref{eq:problem_formulation}, we address the Boolean network skeleton problem by investigating the following fundamental questions from the skeleton viewpoint:
\begin{enumerate}
    \item What information is fundamental for any given Boolean network?
    \item What information can be directly extracted from a specific Boolean network?
    \item What information can be simplified with minimal impact on the abstraction of a specific Boolean network?
\end{enumerate}

\begin{figure}[t]
    \vspace{-1em}
    \centering
    \begin{minipage}{0.48\textwidth}
        \centering
        {\scriptsize
        \begin{equation}
            \begin{aligned}
            S_0  =& A_0 \oplus B_0,\quad CO_0 = A_0 \wedge B_0 ,\\
            S_i  =& \big{(}A_i \oplus B_i\big{)} \oplus \big{(}CO_{i-1} \wedge (A_{i-1} \oplus B_{i-1})\big{)},~\text{for } i = 1, 2, 3;\\
            CO_i =& \big{(}A_i \wedge B_i\big{)} \vee \big{(}CO_{i-1} \wedge (A_{i-1} \oplus B_{i-1})\big{)} \vee \\                 & \big{(}A_i \wedge CO_{i-1}\big{)} \vee \big{(}B_i \wedge CO_{i-1}\big{)},~\text{for } i = 1, 2, 3.
            \label{eq:adder-4}
            \end{aligned}
        \end{equation}
        \textbf{Boolean algebra of a 4-bit Ripple-Carry Adder}
        }
        
    \end{minipage}%
    \\
    \begin{minipage}{0.48\textwidth}
        \centering
        \includegraphics[width=\textwidth]{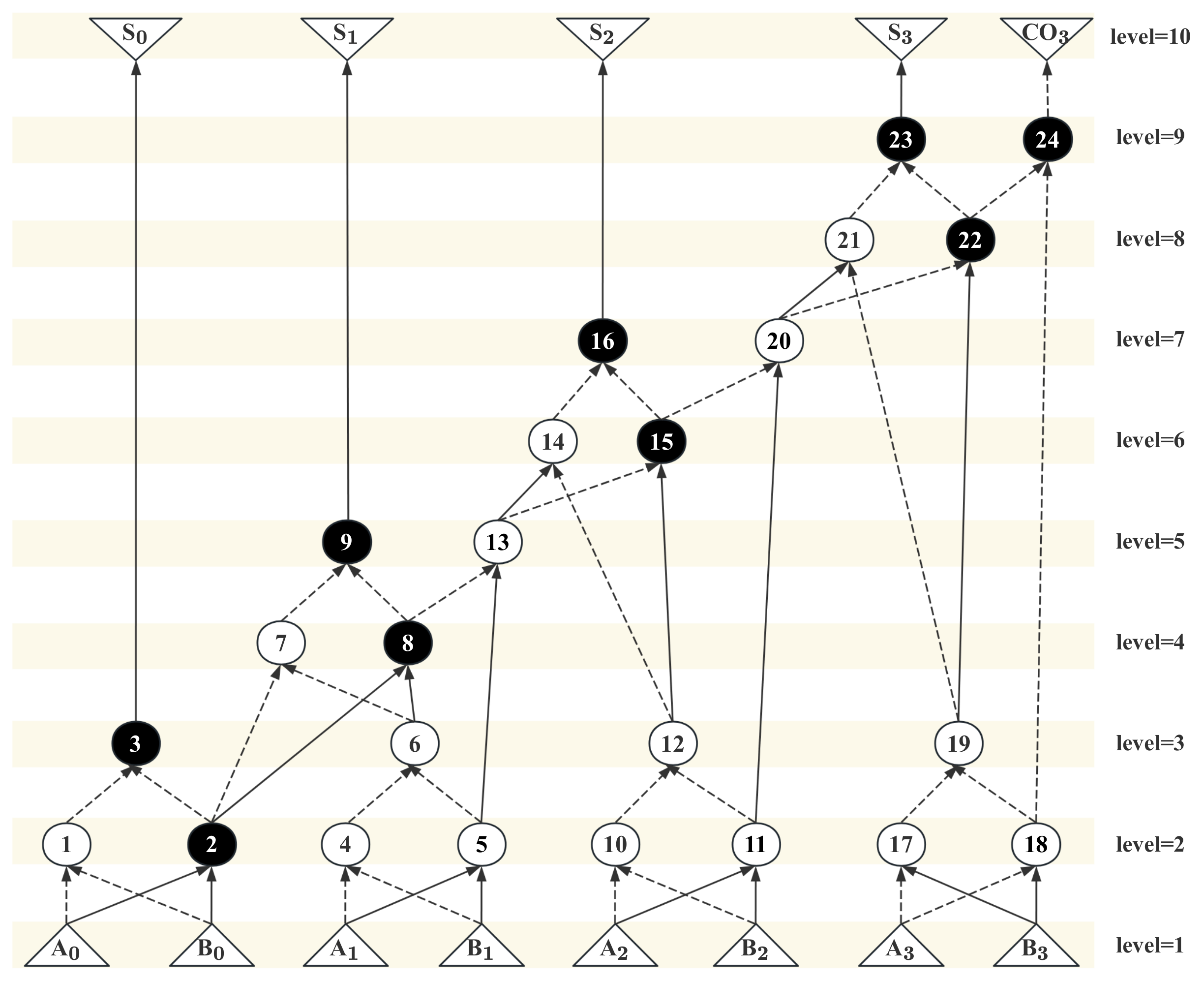}
        \caption{The Boolean network visualization of a 4-bit ripple-carry adder in AIG format.}
        \label{fig:adder-4}
    \end{minipage}
    \vspace{-1em}
\end{figure}

To address the questions outlined above, we present observations derived from a typical 4-bit ripple carry adder (CRA), which subsequently inform a preliminary solution direction. 
For a 4-bit CRA, defined by \(F[4] = A[4] + B[4]\), where \(S\) represents the 4-bit sum vector and \(CO\) denotes the 4-bit vector of carry-out bits, the Boolean function expression is provided in \cref{eq:adder-4}. 
Additionally, \cref{fig:adder-4} illustrates its graph structure to enable deeper analysis. 
To aid this examination, we highlight the sum and carry-out nodes in the graph and assign levels to all nodes. 
Based on this, we derive the following observations and analysis:

\paragraph{Observation 1: Reachability}
For any node \(a\) and its transitive fanin-cone node set \(\mathcal{V}^{\text{TFI}}\), all nodes in \(\mathcal{V}^{\text{TFI}}\) are reachable to \(a\). 
Based on Observation 1, nodes along a single path exhibit varying depths, establishing a Boolean dependency relationship. Conversely, if nodes \(a\) and \(b\) reside on distinct paths, they lack both Boolean dependency and reachability, highlighting the path-specific nature of these relationships.

\paragraph{Observation 2: Boolean Dependency}
Boolean dependency is an inherent attribute of Boolean networks, as established by its definition. 
It implies that the dependency between nodes influences their depth ordering under the unit delay model~\cite{unit_delay_iccad96}. Specifically, if node \(a\) depends on node \(b\), then \(\textit{depth}(a) > \textit{depth}(b)\) must hold, reflecting the directional flow of information in the network.

\paragraph{Observation 3: Reconvergence}
Reconvergence arises as an inevitable consequence of local sharing induced by logic optimization. 
It occurs when signals diverge at a node and subsequently reconverge at a later transitive fanout stage, forming a reconvergence cone. 
This phenomenon generates multiple parallel paths within the cone, complicating the network’s structure and dependency analysis.
From a graph reduction viewpoint, there exist opportunities to merge the reconvergence nodes to simplify Boolean networks.

\paragraph{Observation 4: Dynamic vs. Static Properties}
Boolean equivalence and structural bias underscore the dynamic nature of a Boolean network’s structure for a specific design. Locally, \cref{fig:boolean_equivalence} exemplifies this by depicting distinct graph structures for the function \(F = A \land B \land C \land D\) under two equivalent conditions. 
Globally, logic optimization techniques can iteratively transform an entire Boolean network by substituting local substructures with their Boolean equivalents. 
From both Boolean algebra and graph-theoretic perspectives, the PIs and POs remain fixed for a given design, anchoring their static functionality within dynamic structural variations.

For instance, consider node $n_9$ in \cref{fig:adder-4}, with its transitive fanin-cone node set \(\mathcal{V}^{\text{TFI}}\) = \{$n_{A_0}$, $n_{B_0}$, $n_{A_1}$, $n_{B_1}$, $n_2$, $n_4$, $n_5$, $n_6$, $n_7$, $n_8$\}. 
The depth of each node in \(\mathcal{V}^{\text{TFI}}\) is less than that of node 9, consistent with Observation 1. 
Reconvergence is evident as signals diverge at node 6 and merge at node 9, positioning nodes 7 and 8 on parallel paths. 
Consequently, nodes 7 and 8 exhibit neither Boolean dependency nor reachability, aligning with Observations 2 and 3.
These insights, drawn from the intricate interplay of structure and functionality in Boolean networks, directly address the questions posed earlier. 
The principle of ``less is more'' suggests that a balanced consideration of both graph structure and functionality can yield more reliable skeletonization outcomes. 
This analysis diverges from traditional approaches, which often neglect functionality-related structures, underscoring the critical role of such information in effective skeleton extraction.

\section{\textit{BoolSkeleton}: Boolean Network Skeleton}
\label{sec:method}

\begin{figure*}[t]
    \centering
    \includegraphics[width=1\textwidth]{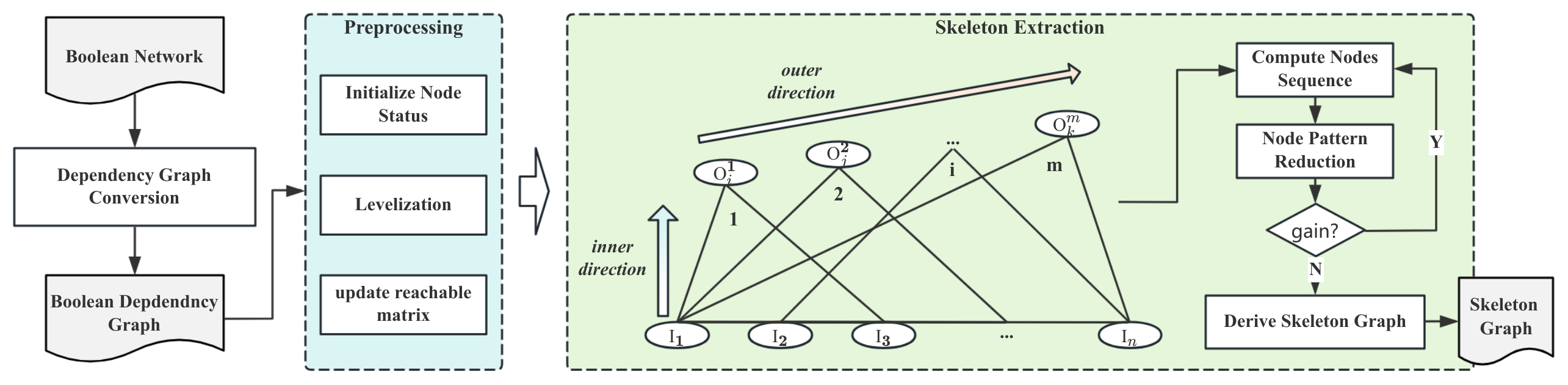}
    \caption{The framework of the proposed Boolean Network Skeleton method, \textit{BoolSkeleton}. }
    \label{fig:framework}
    \vspace{-1em}
\end{figure*}

In this section, we will introduce \textit{BoolSkeleton}, the Boolean network skeleton method as shown in \cref{fig:framework}.
It comprises two primary stages: preprocessing and reduction.
First, the Boolean network is transformed into a Boolean dependency graph with functionality-related node status assignment. 
Then, an iterative node-level pattern reduction approach is employed to get the reduced skeleton.

\subsection{Phase1: Preprocessing}
\label{sec:method:preprocessing}

\subsubsection{Boolean Dependency Graph Recovery}
To facilitate the graph operation, the Boolean network \(\mathcal{C}\) is first transformed into its corresponding Boolean dependency graph \(\mathcal{G}\), a process that systematically maps the network's functional dependencies into a graph representation.
To fully capture the graph structure of \(\mathcal{C}\), we explicitly represent each inverter which is embedded in the edges as a distinct node in \(\mathcal{G}\).
This representation allows the skeletonization algorithm to determine which nodes should be reduced or merged in subsequent steps.
By adopting this approach, all structural details of the Boolean network are preserved while enhancing flexibility for further manipulation.

\subsubsection{Node Status Assignment}
Based on the analysis in \cref{sec:problem:analysis}, PIs and POs constitute critical components of a Boolean network, both from the perspectives of Boolean algebra and graph structure. 
In contrast, internal gates exhibit a dynamic status within the graph structure due to structural bias, rendering them more amenable to reduction or merging during Boolean network skeletonization.

\begin{figure}[!htb]
\vspace{-1em}
    \centering
    \includegraphics[width=0.45\textwidth]{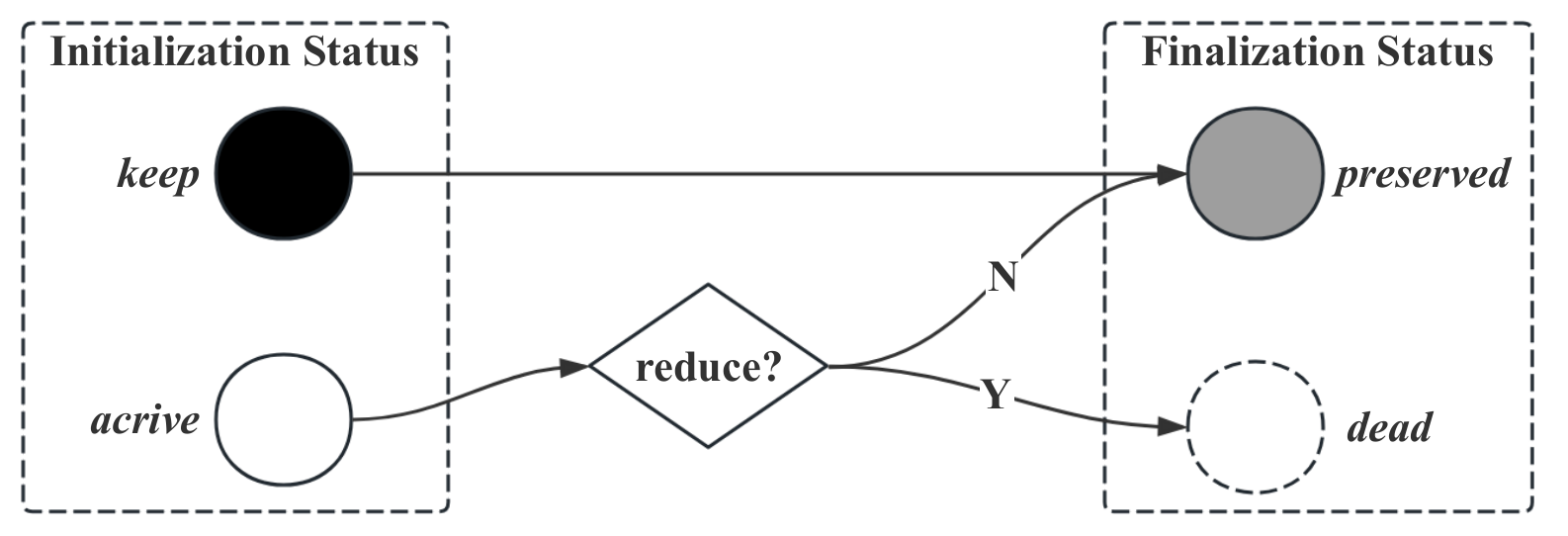}
    \caption{Node status transformation diagram.}
    \label{fig:node-status}
\vspace{-1em}
\end{figure}

To differentiate between static and dynamic information, we define and assign a status to each node, categorized into four types: \{\textit{keep}, \textit{active}, \textit{preserved}, \textit{dead}\}. 
These statuses are illustrated in \cref{fig:node-status} via a transformation diagram. 
Initially, nodes are classified as either \textit{keep} or \textit{active}: \textit{keep} designates nodes with static information that must be retained, while \textit{active} indicates nodes with dynamic information that are candidates for reduction. 
Upon completion of the process, nodes transition to either \textit{preserved} or \textit{dead}: \textit{preserved} signifies nodes retained in the skeleton, and \textit{dead} denotes those eliminated.
During the initialization phase, PIs and POs are assigned the \textit{keep} status, whereas internal gates are designated as \textit{active}:
\begin{equation}
\small
\begin{aligned}
v^s_i& \gets \textit{keep},  &\forall & v_i \in \mathcal{V}^{I};   \\
v^s_o& \gets \textit{keep},  &\forall & v_o \in \mathcal{V}^{O};   \\
v^s_g& \gets \textit{active},&\forall & v_g \in \mathcal{V}^{G},
\end{aligned}
\label{eq:status}
\end{equation}
In subsequent steps, \textit{active} nodes are progressively reassigned to either \textit{preserved} or \textit{dead} based on the reduction process.

\subsection{Phase2: Reduction}
\label{sec:method:reduction}

The reduction process leverages homogeneous pattern reduction applied to the Boolean dependency graph. 
We begin by defining the patterns and their corresponding reduction rules. 
Subsequently, an iterative, fanin-limited, node-level pattern reduction approach is employed to eliminate nodes while preserving the skeleton structure.

\subsubsection{Patterns and the Rule}

\begin{figure}[!tb]
\vspace{-1em}
    \centering
    \subfigure[pattern format.]{
    \includegraphics[scale=0.45]{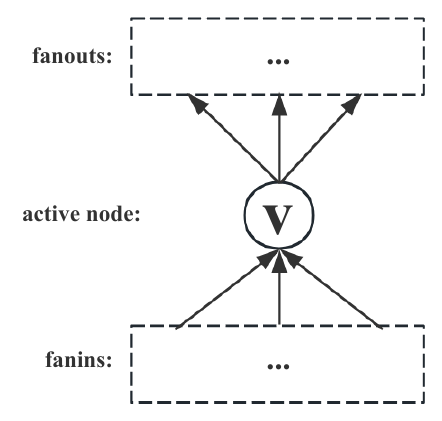}
    }
    \hspace{1cm}
    \subfigure[homo-pattern-1.]{
    \includegraphics[scale=0.45]{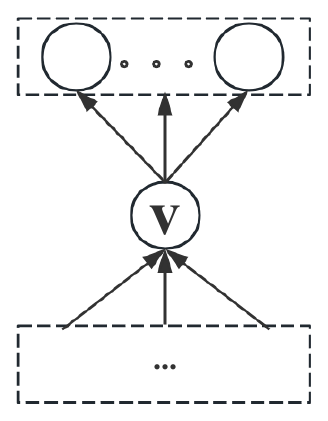}
    }
    \\
    \subfigure[homo-pattern-2.]{
    \includegraphics[scale=0.45]{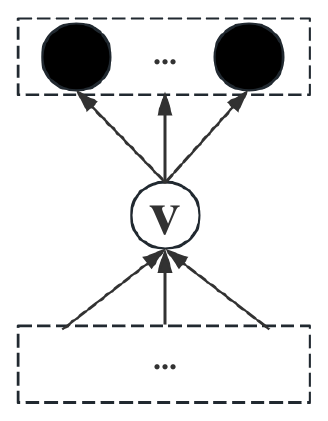}
    }
    \hspace{1cm}
    \subfigure[hetero-pattern]{
    \includegraphics[scale=0.45]{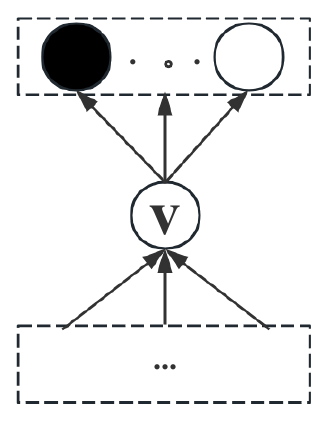}
    }
    \caption{The illustration of the defined patterns.}
    \label{fig:patterns}
\vspace{-1em}
\end{figure}

\begin{definition}[Pattern]
\label{def:pattern}
A pattern \(\mathbf{P} = (v, \mathcal{V}^{\text{fanin}}, \mathcal{V}^{\text{fanout}})\) is defined as a node-level DAG subgraph comprising three components: a central node \(v\), its in-degree nodes \(\mathcal{V}^{\text{fanin}}\), and its out-degree nodes \(\mathcal{V}^{\text{fanout}}\).
\end{definition}

\cref{def:pattern} establishes the node-level pattern structure, with its graphical representation depicted in \cref{fig:patterns}~(a). 
The proposed method targets node-level reduction to derive an abstraction of the original graph, and accordingly, the defined pattern format emphasizes a node-centric perspective. 
In practice, patterns are classified into two categories based on the consistency of the statuses of \(\mathcal{V}^{\text{fanout}}\) in \(\mathbf{P}\): heterogeneous and homogeneous patterns, as detailed below:

\begin{definition}[Heterogeneous Pattern]
\label{def:pattern:hetero}
A pattern \(\mathbf{P}\) is termed a heterogeneous pattern, denoted \(\mathbf{P}_{\text{hetero}}\), if its fanout nodes \(\mathcal{V}^{\text{fanout}}\) exhibit distinct statuses.
\end{definition}

\begin{definition}[Homogeneous Pattern]
\label{def:pattern:homo}
A pattern \(\mathbf{P}\) is termed a homogeneous pattern, denoted \(\mathbf{P}_{\text{homo}}\), if its fanout nodes \(\mathcal{V}^{\text{fanout}}\) share the same status.
\end{definition}

\begin{figure}[t]
\vspace{-1em}
    \centering
    \includegraphics[width=0.5\textwidth]{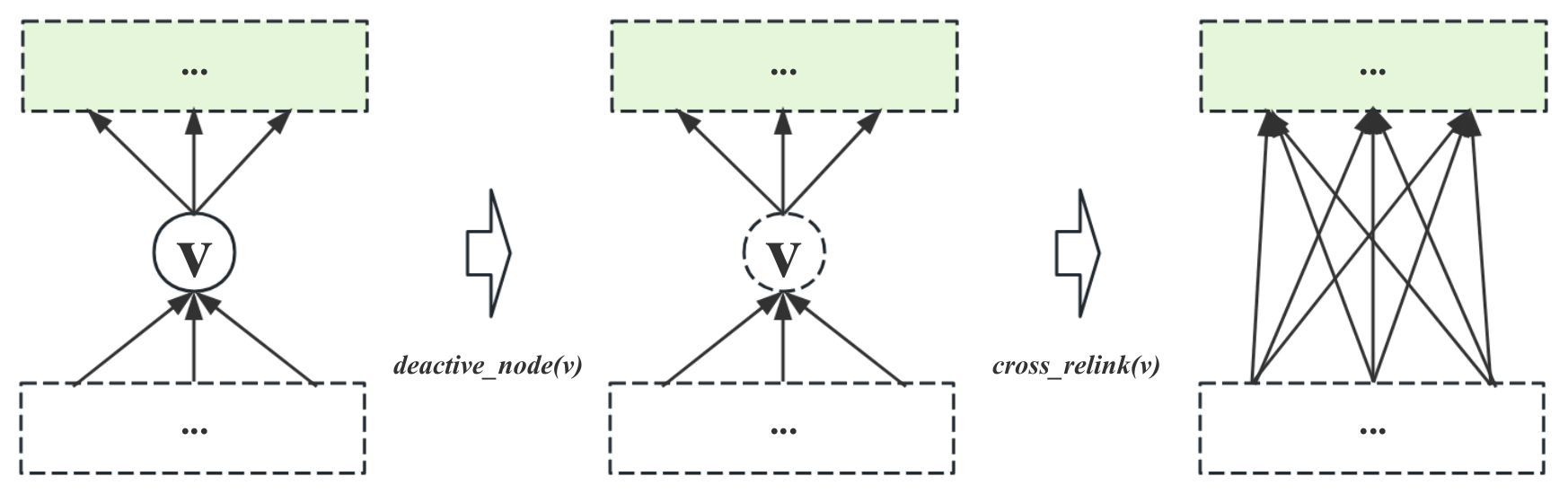}
    \caption{The pattern reduction rule.}
    \label{fig:reduction_rule}
\vspace{-1em}
\end{figure}

\begin{table}[!tbh]
\vspace{-1em}
\centering
\small
\caption{The occurrence probability of the defined patterns.}
\begin{tabular}{ c | c c c }
\toprule
\diagbox{\textbf{Order}}{\textbf{Pattern}} & $\mathbf{P}_{homo}^{1}$   & $\mathbf{P}_{homo}^{2}$ & $\mathbf{P}_{hetero}$   \\
\midrule
Topo-order~(PI $\rightarrow$ PO)                   &    high        &  low            & low              \\
Reverse Topo-order~(PO $\rightarrow$ PI)           &    low         &  high           & high             \\
\bottomrule
\end{tabular}
\label{tab:pattern_occurrence}
\vspace{-1em}
\end{table}

\begin{definition}[Reduction rule]
\label{def:reduction_rule}
The reduction rule $\sigma$ is defined as a mapping function that maps the pattern to its target subgraph $\mathbf{P'}$:
\begin{equation}
\small
\begin{aligned}
\mathbf{P'} & \gets \sigma(\mathbf{P}),
\end{aligned}
\end{equation}
where \nlwdel{$\mathbf{P} = v \cup \mathcal{V}^{in} \cup \mathcal{V}^{out}$,} $\mathbf{P'} = \mathcal{V}^{in} \cup \mathcal{V}^{out}$\nlwdel{, $\mathbf{P}^{'} \subset \mathbf{P}$ and $|\mathbf{P}^{'}| = |\mathbf{P}| - 1$}.
\end{definition}

\cref{fig:patterns}(b) and \cref{fig:patterns}(c) show the homogeneous patterns $P^1_{homo}$ and $P^2_{homo}$, respectively, and \cref{fig:patterns}(d) shows the heterogeneous pattern $P_{hetero}$.
\cref{tab:pattern_occurrence} demonstrates the probability of the occurrence of the defined patterns according to the assigned node status.
It should be noted that not all the patterns are used for the reduction.
The heterogeneous pattern can distinguish the \nlwdel{Boolean dependency}\nlwadd{depth order} between different nodes, while the homogeneous pattern can not; it mainly propagates the signals to the nodes with the same status.
These properties are discussed in Property. \ref{property:pattern:homo} and Property. \ref{property:pattern:hetero}.
It should be noted that the pattern reduction rule $\sigma$ is focused on node-level reduction as shown in \cref{fig:reduction_rule}.
The \textbf{PatternReduction} is defined as the node reduction operator on an ``active'' node $v$ of the Boolean dependency graph $\mathcal{G}$ by the reduction rule $\sigma$: 
\begin{equation}
\small
\textbf{PatternReduction}\Big(\mathcal{G}, v, \mathbi{A}, \mathbi{R}, \mathbf{P}, \sigma(\mathbf{P}), K\Big),
\end{equation}
where $K$ is the limitation of the fanin size of a pattern, which can be used to control the graph coarsening ratio.

For any given Boolean dependency graph $\mathcal{G}$ with the adjacent matrix $\mathbf{A}$ and the reachable matrix $R$, current processed node $v_k$, the pattern reduction can be formulated:
\begin{equation}
\small
\begin{aligned}
v_k^{s} & \gets dead, \\
\forall v_i \in \mathcal{V}_{v_k}^{fanin}, \forall v_o \in \mathcal{V}_{v_k}^{fanout}. &
\begin{cases}
1.~\mathbi{A}_{v_i,v_k} \gets 0, \\ 
2.~\mathbi{A}_{v_k,v_o} \gets 0, \\ 
3.~\mathbi{A}_{v_i,v_o} \gets 1~(\mathbi{A}_{v_i,v_o}=0), \\ 
4.~\mathbi{R}_{v_i,v_o} \gets 1, \\
\end{cases}
\end{aligned}
\label{eq:reduction_rule}
\end{equation}
As illustrated by \cref{fig:reduction_rule} and \cref{eq:reduction_rule}, the process begins by updating the status of node \(v_k\) to \textit{dead} and severing all its connections.
Subsequently, for each pair of nodes from \(\mathcal{V}_{v_k}^{\text{fanin}}\) to \(\mathcal{V}_{v_k}^{\text{fanout}}\), a direct edge is established (\(\mathbf{A}_{v_i, v_o} = 1\)) if no prior path existed—where a path implies reachability—and the reachability matrix is updated accordingly (\(\mathbf{R}_{v_i, v_o} = 1\)).

\subsubsection{Iterative Node-level Pattern Reduction Approach}
\label{sec:algorithm}

The Boolean dependency graph reduction can be addressed by the iteratively fanin-limited node-level homogeneous pattern reduction:
\begin{equation}
\small
\label{eq:circuit_pooling:2}
\begin{aligned}
\mathcal{G} & \gets \textbf{BNetworkSkeletonize}(\mathcal{C}), \\
            & \Rightarrow \mathcal{G}^{'} \gets \underset{i=0}{\stackrel{n-1}{\textbf{LimitedPatternReduction}}}\big{(}\mathcal{G}, v_i, \mathbi{A}, \mathbi{R}, \mathbf{P}_{homo}, \sigma, K\big{)}
\end{aligned}
\end{equation}

\begin{algorithm}[t]
\caption{Iterative Node-level Pattern Reduction}
\label{alg:skeleton}
\begin{algorithmic}[1]
    \REQUIRE Boolean dependency graph $\mathcal{G}$, homogeneous pattern $\mathbf{P}_{homo}$, reduction rule $\sigma$, limitation $K$
    \ENSURE  Skeleton graph $\mathcal{G}^{'}$
    \STATE $\mathbi{A}, \mathbi{R} \gets \text{update\_graph\_matrix}(G)$
    \STATE levelization($\mathcal{G}$)
    \STATE initialize\_node\_status($\mathcal{G}$)
    \STATE $\mathcal{V}^{PO'} \gets$ sort\_nodes\_by\_level\_ascending($\mathcal{V}^{PO}$)
    \WHILE{\textit{true}}
        \STATE set $count \gets 0$        
        \FOR{$v_o$ in $\mathcal{V}^{PO'}$}
            \STATE $\mathcal{V}^\textit{fanincone} \gets \text{collect\_fanincone\_by\_dfs}(\mathcal{G}, v_o)$
            \FOR{$v$ in $\mathcal{V}^\textit{fanincone}$}
                \IF{get\_fanin\_size($\mathcal{G}, v$) $\geq$ $K$}
                    \STATE set\_node\_status\_preserved($\mathcal{G}, v$)
                    \STATE continue
                \ENDIF
                \STATE $count~\text{+=}~\textbf{PatternReduction}(\mathcal{G}, v, \mathbi{A}, \mathbi{R}, \mathbf{P}_{homo}, \sigma)$
            \ENDFOR
        \ENDFOR
        \IF{$count$ = 0}
            \STATE break       
        \ENDIF
    \ENDWHILE
    \STATE $\mathcal{G}^{'} \gets \text{skeleton\_nodes\_collection}(\mathcal{G})$ 
    \STATE {\bfseries Return:} $\mathcal{G}^{'}$
\end{algorithmic}
\end{algorithm}

\cref{eq:circuit_pooling:2} provides a detailed formulation of the Boolean network skeleton solution through an iterative, node-level, fanin-limited pattern reduction approach.
\cref{alg:skeleton}  takes inputs of the converted Boolean dependency graph \(\mathcal{G}\), the homogeneous pattern \(\mathbf{P}_{\text{homo}}\), the reduction rule \(\sigma\), and a fanin size limit \(K\), producing the extracted skeleton graph \(\mathcal{G}'\) as output. 
The process begins with a preprocessing phase (lines 1–3), which involves computing the adjacency matrix \(\mathbf{A}\) and reachability matrix \(\mathbf{R}\), updating each node’s level using the unit delay model~\cite{unit_delay_iccad96}, and initializing node statuses as \cref{eq:status}.
Subsequent steps focus on node-level reduction. Initially, primary output nodes are collected and sorted into \(\mathcal{V}^{\text{PO}'}\) in ascending order of depth. 
The unit delay model \nlwdel{ensures that processing nodes in depth order preserves Boolean dependency}\nlwadd{guarantees that the order of processing nodes can maintain depth order}. 
Within the while loop, a counter \(count\) tracks the number of nodes reassigned to \textit{dead} in each iteration. 
The fanin-limited pattern reduction is applied in topological order to the transitive fanin-cone \(\mathcal{V}^{\text{fanincone}}\) of a specific primary output node. 
If a node’s fanin size exceeds \(K\), its status is set to \textit{preserved} (lines 10–13); otherwise, it is evaluated for reduction via the \(\textit{try\_node\_pattern\_reduction}()\) function, detailed in \cref{alg:node_reduction} (line 15). 
The loop terminates when no nodes are reduced in an iteration (lines 17–19). 
Finally, the skeleton graph \(\mathcal{G}'\) is constructed by removing all \textit{dead} nodes and connecting the remaining \textit{preserved} nodes of \(\mathcal{G}\) (line 21).

\begin{algorithm}[t]
\caption{Try Homogeneous Pattern Reduction}
\label{alg:node_reduction}
\begin{algorithmic}[1]
    \REQUIRE Boolean Dependency Graph $\mathcal{G}$, adjacent matrix $\mathbi{A}$, reachable matrix $\mathbi{R}$, node $v$, pattern $\mathbf{P}$, reduction rule $\sigma$
    \ENSURE The reassigned dead node count: $count$
    \STATE Initialize $count \gets 0$
    \IF{not is\_node\_status\_activate($\mathcal{G}$, $v$)}
        \STATE {\bfseries Return:} $count$
    \ENDIF
    \IF{is\_match\_pattern($\mathcal{G}, v, \mathbf{P}$)}
        \STATE set\_node\_status\_dead($\mathcal{G}, v$)
        \STATE reset\_adjacent\_matrix\_at($v, \mathbi{A}$)
        \STATE reset\_reachable\_matrix\_at($v, \mathbi{R}$)
        \STATE add\_cross\_edges\_and\_reachability($v, \mathbi{A}, \mathbi{R}$)
        \STATE $count \gets 1$
    \ELSE
        \STATE set\_node\_status\_preserved($\mathcal{G}, v$)
    \ENDIF
    \STATE {\bfseries Return:} $count$
\end{algorithmic}
\end{algorithm}

\cref{alg:node_reduction} elaborates the homogeneous pattern reduction process, adhering to \cref{eq:reduction_rule}, with its application context illustrated in \cref{alg:skeleton} (line 14). 
The variable \(count\) indicates whether the current node is reassigned to \textit{dead}, returning 1 for true and 0 for false. 
The procedure first verifies that the node \(v\)’s status is \textit{active}; if not, it exits immediately. 
If node \(v\) and its surrounding subgraph match the specified pattern, a node-level reduction is executed following the four steps in \cref{eq:reduction_rule} (lines 6–9). 
Notably, an edge between \(v\)’s fanin and fanout nodes is added only if \nlwdel{no prior reachability exists}\nlwadd{none of the other fanins of the fanout are reachable by the node $v$}, preventing the introduction of redundant or nested edges that could compromise the skeleton graph’s integrity.


\subsubsection{Theorem Analysis}
Based on the definitions of the proposed patterns and their reduction rule, we derive several key properties that echo the Boolean network analysis in \cref{sec:problem:analysis}.
\nlwnew{We assume that the circuit does not contain the disjoint support outputs.}

\begin{property}
\label{property:pattern:homo}
The homogeneous pattern \(\mathbf{P}_{\text{homo}}\) does not alter the depth order among primary output (PO) nodes. As depicted in \cref{fig:patterns}~(b) and (c), all fanout nodes share the same status. Per Property~\ref{property:pattern:hetero}, the central node \(v_c\) does not contribute to distinguishing its fanout cones, thus preserving the relative depth order of POs.
\end{property}

\begin{property}
\label{property:pattern:hetero}
The heterogeneous pattern \(\mathbf{P}_{\text{hetero}}\) preserves the depth order among PO nodes. As illustrated in \cref{fig:patterns}~(d), a heterogeneous pattern \(\mathbf{P}_{\text{hetero}}\) includes at least one \textit{keep} node \(v_k \in \mathcal{V}^{\text{fanout}}\) and one \textit{active} node \(v_a \in \mathcal{V}^{\text{fanout}}\). Assuming \(depth(v_c) = d_c\) for the central node \(v_c\), and excluding the influence of other nodes on this pattern’s depth, we have \(depth(v_k) = d_c + 1\) and \(depth(v_a) = d_c + 1\). If \(v_a\) targets a PO node \(v_{\text{po}}\), then,according to unit delay model in \cref{eq:unit_delay_model}, \(depth(v_{\text{po}}) = \max_{(v_i \to v_{\text{po}}) \in \mathcal{E}} (depth(v_i) + 1)\). Thus, \(depth(v_{\text{po}}) \geq depth(v_a) + 1 = d_c + 2 > depth(v_k)\), ensuring the depth order is maintained.
\end{property}

\begin{figure*}[t]
    \centering
    \begin{minipage}{0.55\textwidth}
        \centering
        \includegraphics[width=0.6\textwidth]{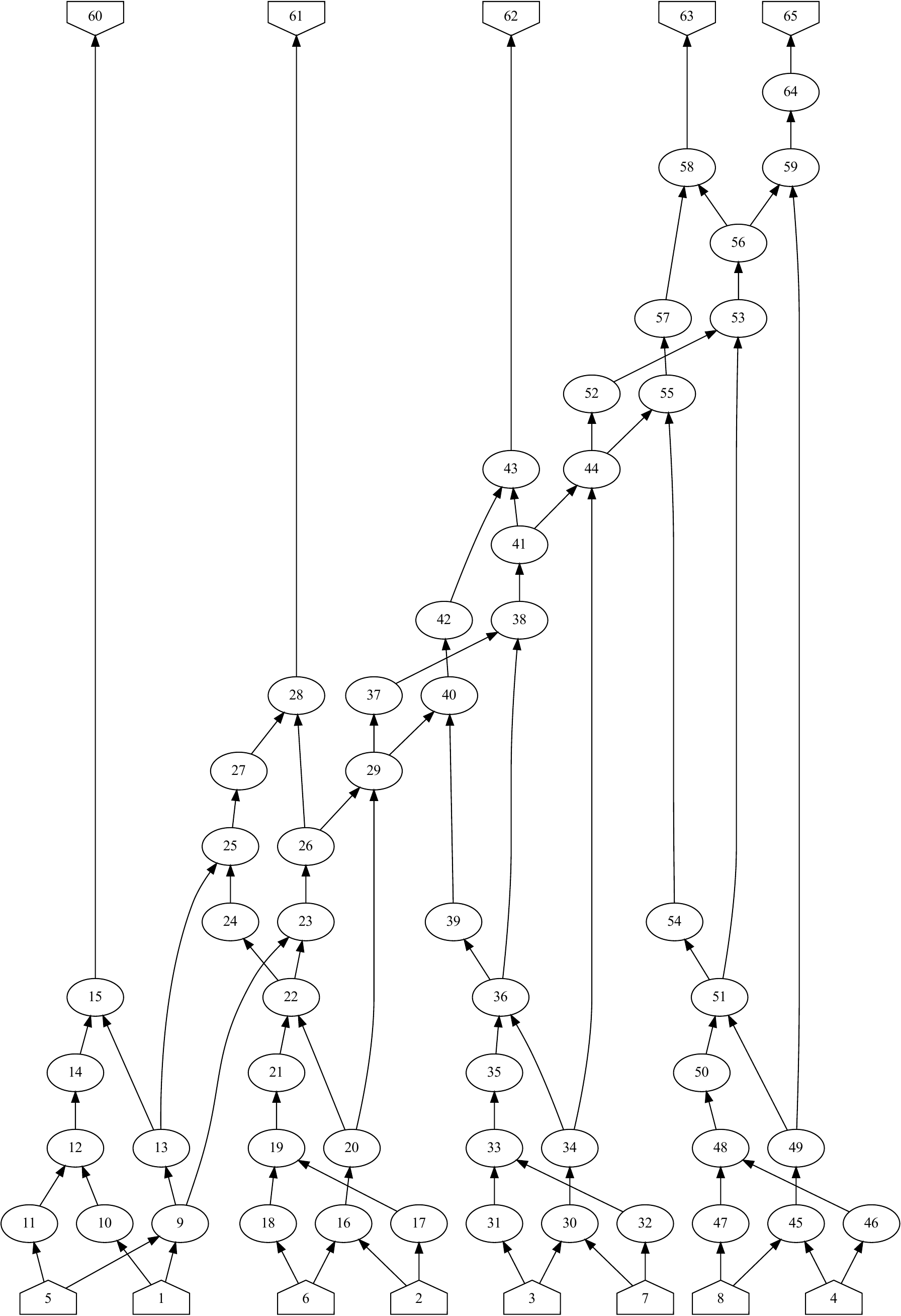}
        \caption*{(a) Skeleton~(K=1)~\footnote{K=1 means the Boolean network is first translated into a Boolean dependency graph.}}
        \includegraphics[width=0.6\textwidth]{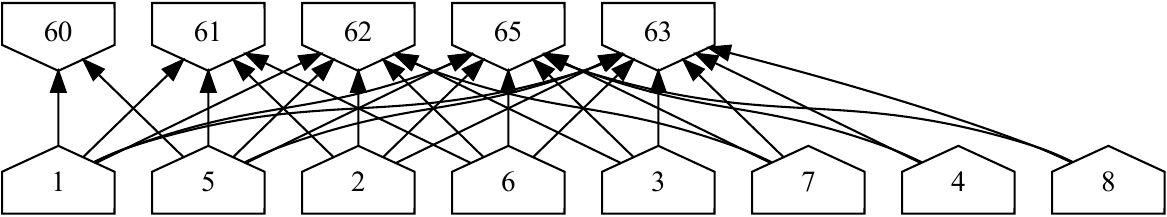}
        \caption*{(e) Skeleton~(K=$\infty$~(unlimited))}
    \end{minipage}
    \hfill
    \begin{minipage}{0.44\textwidth}
        \centering
        \includegraphics[width=0.6\textwidth]{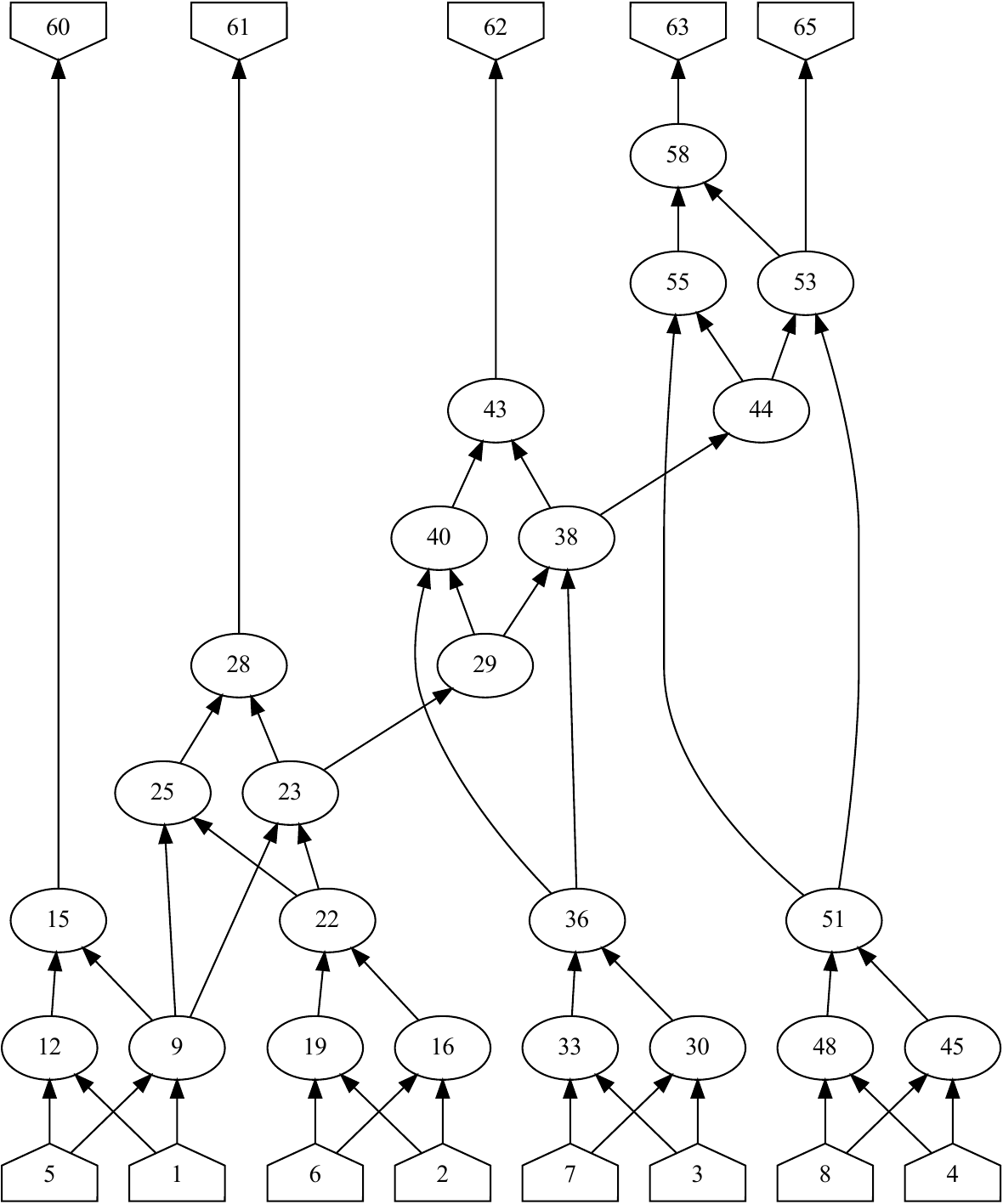}
        \caption*{(b) Skeleton~(K=2)}
        \includegraphics[width=0.6\textwidth]{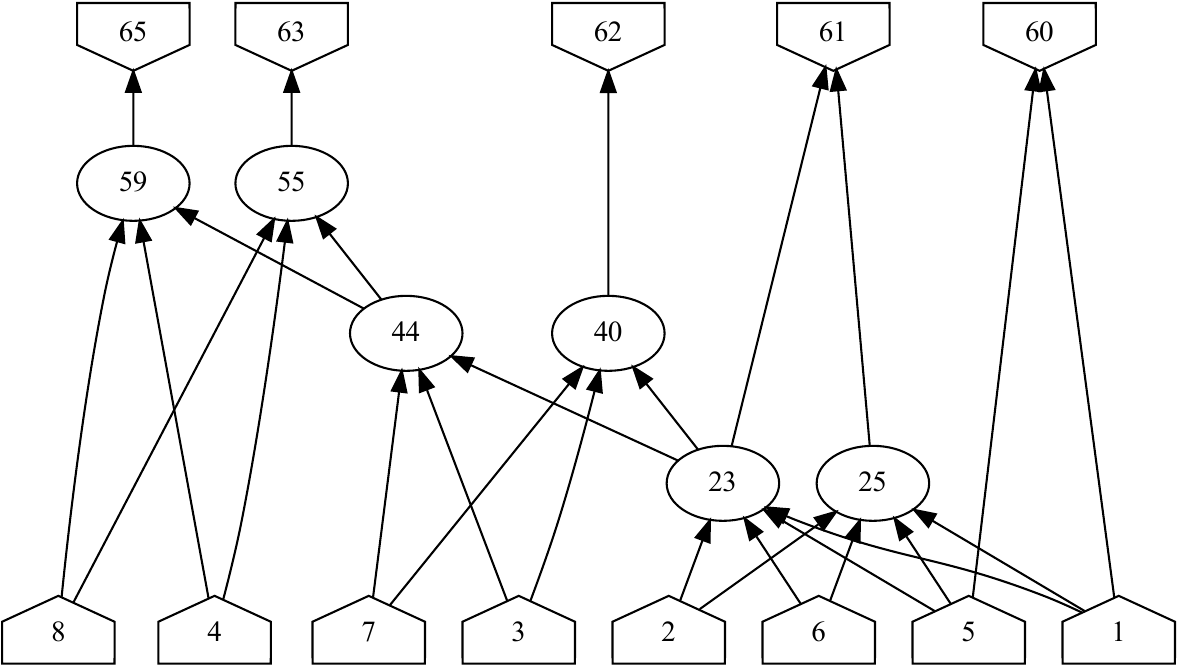}
        \caption*{(c) Skeleton~(K=3)}
        \includegraphics[width=0.6\textwidth]{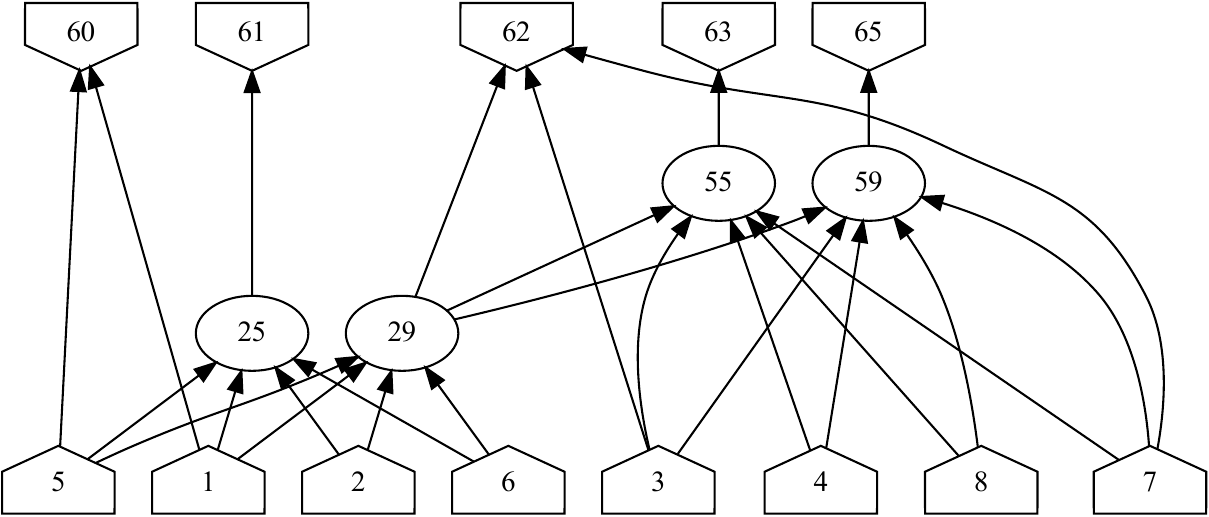}
        \caption*{(d) Skeleton~(K=4)}
    \end{minipage}
    \caption{The case study of the skeleton with different fanin limitation of the 4-bit ripple carry adder in \cref{fig:adder-4}.}
    \label{fig:case}
    \vspace{-1em}
\end{figure*}

\begin{proposition}
\label{prop:reachability}
The \(\textbf{PatternReduction}\) operator preserves both the reachability relations and topological order of the remaining nodes in the Boolean dependency graph \(\mathcal{G}\). Additionally, this preserved reachability ensures the original fanout propagation across the transitive fanin (TFI) and fanout (TFO) cones of the reduced node.
\end{proposition}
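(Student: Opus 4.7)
The plan is to decompose Proposition~\ref{prop:reachability} into three sub-claims that match its three assertions, and prove each by a short structural argument anchored in the four-line update of \cref{eq:reduction_rule}. Let $v_k$ be the node being reduced, with fanin set $\mathcal{V}_{v_k}^{\text{fanin}}$ and fanout set $\mathcal{V}_{v_k}^{\text{fanout}}$. Let $\mathcal{G}$ denote the graph before the operator fires and $\mathcal{G}'$ the graph afterwards, with adjacency/reachability matrices $(\mathbf{A},\mathbf{R})$ and $(\mathbf{A}',\mathbf{R}')$ respectively. The proof reduces to showing that for every pair of surviving nodes $u,w$ the reachability predicate is invariant, and that no new edge violates the pre-existing topological order.

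First I would prove reachability preservation. Take any directed path $\pi$ in $\mathcal{G}$ between two surviving nodes $u$ and $w$. If $\pi$ does not traverse $v_k$, then every edge of $\pi$ is untouched by \cref{eq:reduction_rule} (only incident edges of $v_k$ are deleted), so $\pi$ survives in $\mathcal{G}'$. If $\pi$ does traverse $v_k$, it must do so via some $v_i \in \mathcal{V}_{v_k}^{\text{fanin}}$ and some $v_o \in \mathcal{V}_{v_k}^{\text{fanout}}$, giving a sub-path $\ldots v_i \to v_k \to v_o \ldots$. Rule~3 of \cref{eq:reduction_rule} installs the bridging edge $v_i \to v_o$ whenever it was absent, and rule~4 records $\mathbf{R}'_{v_i,v_o}=1$ in either case. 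Splicing these bridges back into $\pi$ yields a $u \rightsquigarrow w$ walk in $\mathcal{G}'$. Conversely, no spurious reachability is created between surviving nodes, since every new edge has endpoints $(v_i,v_o)$ for which $\mathbf{R}_{v_i,v_o}=1$ already held via the two-hop route through $v_k$.

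Next I would establish topological-order preservation. It suffices to show that $\mathcal{G}'$ remains acyclic and that any topological numbering of $\mathcal{G}$, restricted to surviving nodes, is still valid for $\mathcal{G}'$. The only new edges are $v_i \to v_o$ with $v_i \in \mathcal{V}_{v_k}^{\text{fanin}}$ and $v_o \in \mathcal{V}_{v_k}^{\text{fanout}}$; by construction in $\mathcal{G}$ we have $\mathbf{R}_{v_i,v_k}=\mathbf{R}_{v_k,v_o}=1$, hence any depth labelling consistent with the unit-delay model of \cref{eq:unit_delay_model} satisfies $\text{depth}(v_i) < \text{depth}(v_k) < \text{depth}(v_o)$, so each new edge still points from a smaller to a larger depth. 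Acyclicity then follows because introducing an edge $v_i \to v_o$ with $\text{depth}(v_i) < \text{depth}(v_o)$ cannot close a cycle in a DAG whose layers are respected.

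Finally, the fanout-propagation clause follows as a direct corollary: for every $(v_i,v_o) \in \mathcal{V}_{v_k}^{\text{fanin}} \times \mathcal{V}_{v_k}^{\text{fanout}}$, rules~3 and~4 jointly guarantee $\mathbf{R}'_{v_i,v_o}=1$, and transitive closure through these bridges recovers the $\mathrm{TFI}(v_k) \rightsquigarrow \mathrm{TFO}(v_k)$ connectivity that formerly passed through $v_k$. The main obstacle I anticipate is the careful handling of the case where the bridging edge $v_i \to v_o$ already existed in $\mathbf{A}$ (so rule~3 is a no-op): one must argue that the surviving edge, together with the unchanged rule~4 update on $\mathbf{R}$, is sufficient, and that no double-counting or nested-edge artifact (precisely the concern flagged in the comment accompanying \cref{alg:node_reduction}) corrupts reachability. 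Making this rigorous requires an invariant stating that after every reduction step $\mathbf{R}'$ equals the transitive closure of $\mathbf{A}'$ restricted to surviving nodes, which I would prove by induction on the number of reduction operations performed during the outer loop of \cref{alg:skeleton}.
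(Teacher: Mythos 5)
Your proof is correct and rests on the same underlying fact as the paper's --- transitivity of the reachability partial order through the reduced node --- but you develop it far more completely than the paper does. The paper's entire proof is the single observation that $v_i \preceq v$ and $v \preceq v_o$ imply $v_i \preceq v_o$ for all fanin/fanout pairs of the reduced node, from which it immediately concludes that iterated \textbf{PatternReduction} preserves reachability and topological order; it does not separately argue the two directions of reachability preservation, does not address acyclicity, and does not touch the fanout-propagation clause at all. Your decomposition adds three things the paper leaves implicit: (i) the path-splicing argument showing both that no $u \rightsquigarrow w$ connectivity between survivors is lost \emph{and} that no spurious reachability is introduced, since every bridging edge $(v_i, v_o)$ connects a pair already satisfying $\mathbf{R}_{v_i,v_o}=1$ via the two-hop route; (ii) the depth-monotonicity argument for acyclicity under the unit-delay labelling; and (iii) --- most valuably --- the identification of the invariant that $\mathbf{R}'$ must remain the transitive closure of $\mathbf{A}'$ on surviving nodes, proved by induction over reduction steps. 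Point (iii) is exactly where the paper's one-line proof is thinnest: \cref{alg:node_reduction} suppresses the bridging edge when reachability from $v_i$ to $v_o$ already holds, and if that pre-existing reachability were itself routed through the node being killed, suppression could silently break connectivity; your proposed invariant is the right tool to rule this out, and the paper never confronts the issue. In short, your route buys rigor and exposes a genuine gap in the published argument at the cost of length; the paper's buys brevity by asserting transitivity and stopping there.
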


\begin{proof}[Proof of Proposition~\ref{prop:reachability}]
Consider a node \(v\), for topological order, the transitive property of partial ordering states that if \(v_i \preceq v\) and \(v \preceq v_o\), then \(v_i \preceq v_o\). This is verified as follows:
\begin{equation}
\small
\nonumber
\begin{aligned}
&\forall v_i \in \mathcal{V}_{v}^{\text{fanin}}, \, v_i \preceq v, \\
&\forall v_o \in \mathcal{V}_{v}^{\text{fanout}}, \, v \preceq v_o, \\
&\Rightarrow \forall v_i \in \mathcal{V}_{v}^{\text{fanin}}, \, \forall v_o \in \mathcal{V}_{v}^{\text{fanout}}, \, v_i \preceq v_o.
\end{aligned}
\end{equation}
Thus, iterative applications of \(\textbf{PatternReduction}\) preserve both reachability and topological order for all retained nodes.
\end{proof}

We conclude that the \(\textbf{PatternReduction}\) operator is both reachability- and topology-aware, making it suitable for applications requiring these properties.

\subsection{Case Study}
\label{sec:method:case}



\cref{fig:case} presents a case study of the Boolean network skeleton applied to a 4-bit ripple carry adder design, as shown in \cref{fig:adder-4}, under different coarsening ratios by the fanin size limitation $K$ of the pattern. 
Specifically, \cref{fig:case}~(a) depicts the Boolean dependency graph directly derived from the initial Boolean network~(AIG).
This case study yields two primary observations:
\begin{enumerate}
    \item As the fanin constraint \(K\) increases, the graph coarsening ratio rises, reducing the number of nodes and yielding a more pronounced skeleton structure.
    \item The similarity between the skeleton and the original Boolean dependency graph decreases, as the retained information diminishes with a greater coarsening ratio.
\end{enumerate}

These observations highlight a key insight: the fanin constraint \(K\) in \textit{BoolSkeleton} acts as a control parameter for the graph coarsening ratio. 
The optimal coarsening level varies across tasks; for instance, functionality-related tasks prioritize preserving heterogeneous structures, while Boolean network profiling tasks emphasize balancing local information with the global skeleton. 
To assess the divergence between the skeleton and the original Boolean dependency graph as \(K\) increases, graph similarity metrics, such as those based on spectral analysis~\cite{graphsimilarity_cc2017_evaluation, graphsimilarity_kdd2021_evaluation}, provide a robust evaluation tool.
Selecting an appropriate \(K\) thus depends on task-specific requirements, and while a universal value remains elusive, such metrics enable tailored similarity assessments. 
Further exploration of this variability is deferred to subsequent task-specific analyses.


\section{Empirical Evaluation}
\label{sec:task}

In this section, we will show the effectiveness of \textit{BoolSkeleton} by conducting two validation evaluations: compression and classification,  and two downstream tasks: critical path analysis and timing prediction.

\subsection{Setup}
\label{sec:experiment:setup}

\paragraph{Environment.}
The codes of \textit{BoolSkeleton} are written in C++, and the following tasks are conducted in Python.
The experiments were conducted using the following hardware and software configuration:
\textit{(Hardware)} Intel Xeon Platinum 8380 CPU (160 cores), 512 GB RAM, NVIDIA A100 GPU (40 GB VRAM); 
\textit{(Software)} Ubuntu 20.04.6, Python 3.8, PyTorch 2.0.1, CUDA 12.0, torch\_geometry 2.3.1, scikit-learn 1.2.2, pandas 1.5.3, matplotlib 3.7.1.
This high-performance setup ensures efficient processing of large datasets and complex computations, providing a reliable foundation for experimental evaluation.

\paragraph{Dataset.}
The dataset consists of benchmarks widely adopted in logic synthesis, sourced from IWLS2005~\cite{IWLS2005} and IWLS2015~\cite{IWLS2015}. 
Table~\ref{tab:dataset} summarizes the characteristics of the dataset, including the number of primary inputs (\#PI), primary outputs (\#PO), gates (\#Gate), inverters (\#Inverter), two-input AND gates (\#AND2), edges (\#Edge), and circuit depth (Depth). 
The dataset’s variety—spanning small-scale designs like \textit{ctrl} to complex ones like \textit{sin}—enables robust evaluation of our approach.
This selection ensures diversity across the downstream tasks, offering a comprehensive testbed for \textit{BoolSkeleton}.

\paragraph{Baseline.}
Baseline methods are from a diverse set of techniques, including variation-based approaches~\cite{skeleton_variations} ('variation\_neighborhoods', 'variation\_edges', 'variation\_cliques'), edge-weight optimization~\cite{skeleton_heavy_edge_matching} ('heavy\_edge'), algebraic methods~\cite{skeleton_algebraic_distance} ('algebraic\_JC'), affinity-guided strategies~\cite{skeleton_affinity} ('affinity\_GS'), and Kronecker-based reduction~\cite{skeleton_kron_reduction} ('kron').
These methods, implemented using~\cite{pygsp, skeleton_variations}, are configured with a uniform reduction ratio of 0.3, and compared to assess their effectiveness in reducing graph complexity while preserving structural properties for downstream Graph Neural Network tasks.

\begin{table}[t]
\setlength{\tabcolsep}{3pt}
\centering
\scriptsize
\caption{The characteristics of the source designs.}
\label{tab:dataset}
\begin{tabular}{ l | r r r r r r r r r }
\toprule
\diagbox{\textbf{Design}}{\textbf{Char}}  & \#PI & \#PO & \#Gate & \#Inverter & \#AND2 & \#Edge & Depth    \\
\midrule
\textit{adder}	            & 256	& 129	& 2547	& 1527	& 1020	& 2172	& 511 \\
\textit{bar}	            & 135	& 128	& 6928	& 3592	& 3336	& 6928	& 21 \\
\textit{cavlc}	            & 10	& 11	& 1606	& 913	& 693	& 1400	& 33 \\
\textit{cht}	            & 47	& 36	& 595	& 324	& 271	& 614	& 17 \\
\textit{count}	            & 35	& 16	& 467	& 275	& 192	& 416	& 40 \\
\textit{ctrl}	            & 7	    & 26	& 419	& 245	& 174	& 380	& 20 \\
\textit{i2c}	            & 147	& 142	& 2785	& 1443	& 1342	& 2859	& 36 \\
\textit{int2float}	        & 11	& 7	    & 545	& 285	& 260	& 533	& 31 \\
\textit{max}	            & 512	& 130	& 6366	& 3501	& 2865	& 5988	& 495 \\
\textit{priority}	        & 128	& 8	    & 2350	& 1372	& 978	& 1970	& 499 \\
\textit{router}	            & 60	& 30	& 491	& 234	& 257	& 545	& 73 \\
\textit{s510}	            & 25	& 15	& 432	& 211	& 221	& 470	& 15 \\
\textit{sasc}	            & 113	& 63	& 692	& 337	& 355	& 809	& 16 \\
\textit{sin}	            & 24	& 25	& 11479	& 6063	& 5416	& 10879	& 350 \\
\textit{spi}	            & 240	& 239	& 8037	& 4250	& 3787	& 8049	& 63 \\
\textit{ss\_pcm}	        & 104	& 90	& 829	& 426	& 403	& 932	& 14 \\
\textit{steppermotordrive}	& 28	& 27	& 330	& 150	& 180	& 395	& 16 \\
\textit{ttt2}	            & 24	& 21	& 793	& 442	& 351	& 730	& 28 \\
\textit{unreg}	            & 36	& 16	& 411	& 253	& 158	& 348	& 16 \\
\textit{usb\_phy}	        & 132	& 90	& 963	& 476	& 487	& 1101	& 16 \\
\midrule
\textit{aes}	        & 683	& 529	& 51173	& 22518	& 28655	& 115418	& 44 \\
\textit{div}	        & 128	& 128	& 64826	& 37726	& 27100	& 108555	& 8406 \\
\textit{mem\_ctrl}	    & 1187	& 962	& 20324	& 10323	& 10001	& 41691	    & 58 \\
\textit{multiplier}	    & 128	& 128	& 58958	& 31205	& 27753	& 111242	& 524 \\
\textit{log2}	        & 32	& 32	& 68409	& 36027	& 32382	& 129590	& 597 \\
\textit{square}	        & 64	& 128	& 43595	& 24096	& 19499	& 78151	    & 445 \\
\textit{sqrt}	        & 128	& 64	& 78246	& 45647	& 32599	& 130518	& 10384 \\
\textit{vouter}	        & 1001	& 1	    & 24736	& 14208	& 10528	& 42114	    & 113 \\
\bottomrule
\end{tabular}
\vspace{-1em}
\end{table}

\subsection{Validity evaluation}

\subsubsection{Analysis 1: Boolean Network Compression}

\begin{figure}[t]
\centering
    \subfigure[Graph Size]{
    \includegraphics[width=0.225\textwidth]{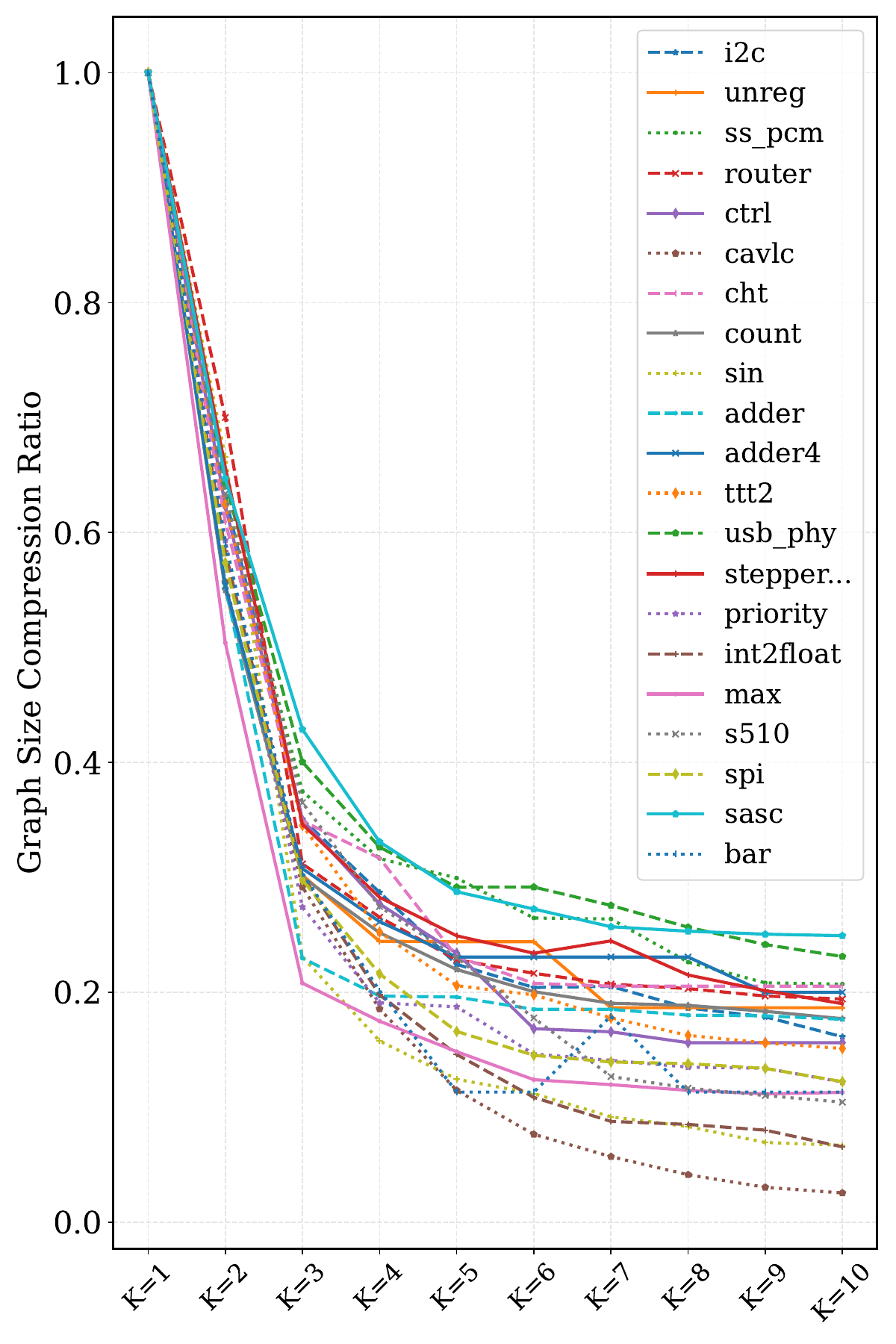}
    }
    \hfill
    \subfigure[Graph Depth]{
    \includegraphics[width=0.225\textwidth]{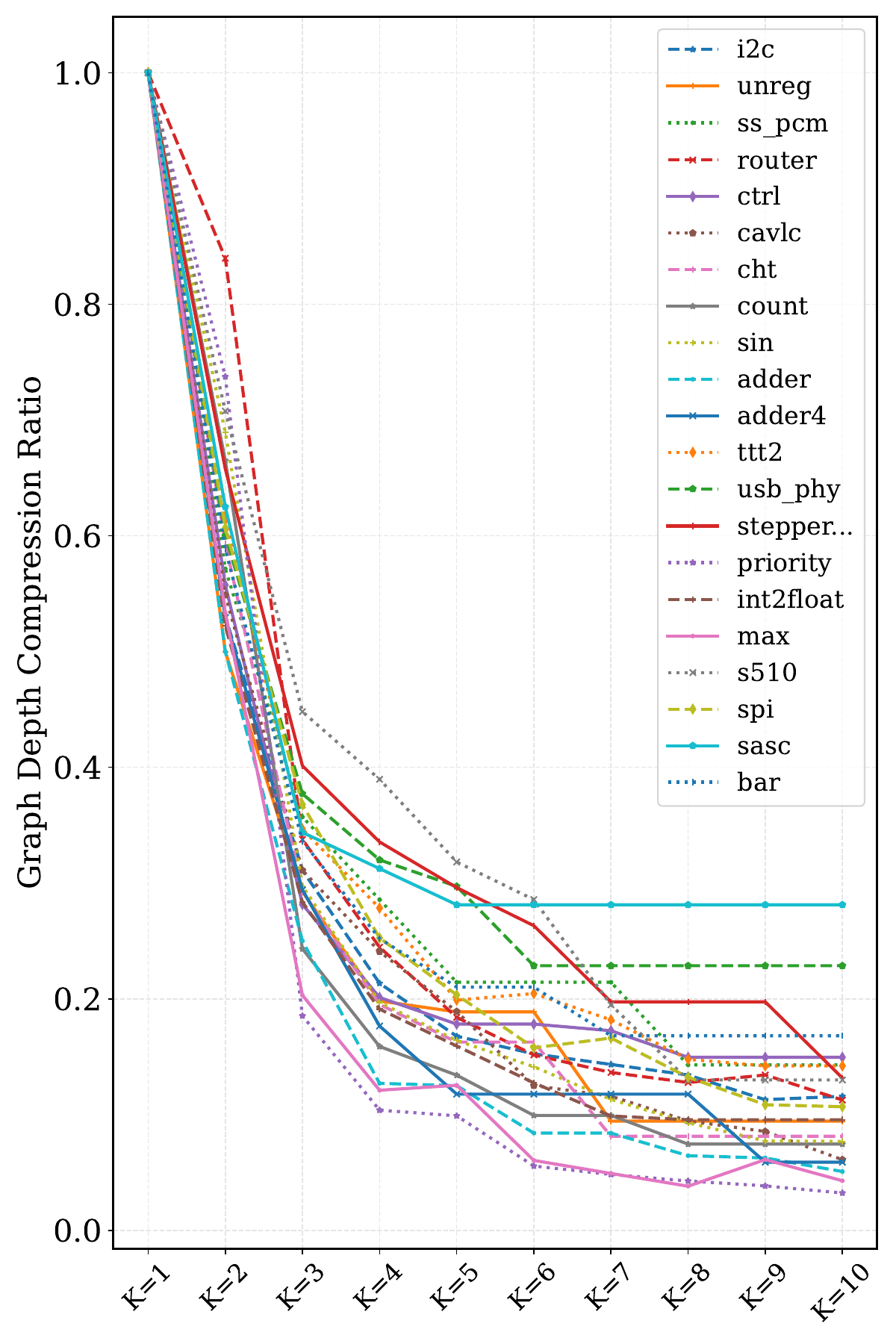}
    }
    \caption{Compression ratio over the graph size and depth.}
    \label{fig:task:compression:ratio}
    \vspace{-1em}
\end{figure}

\begin{figure}[t]
\centering
    \subfigure[Runtime of different Boolean network skeletons.]{
    \includegraphics[width=0.45\textwidth]{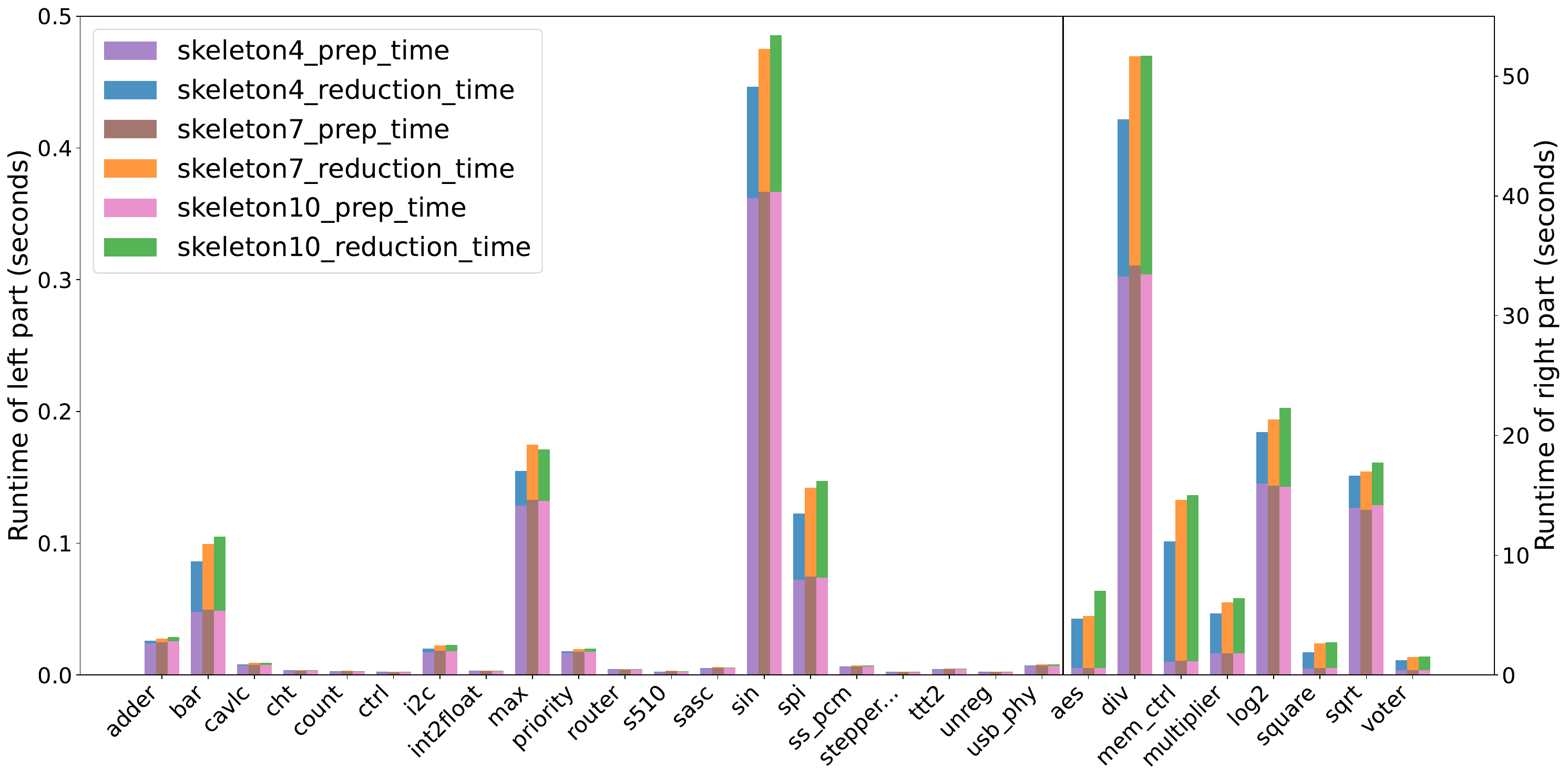}
    }
    \hfill
    \subfigure[Runtime comparison between different methods.]{
    \includegraphics[width=0.45\textwidth]{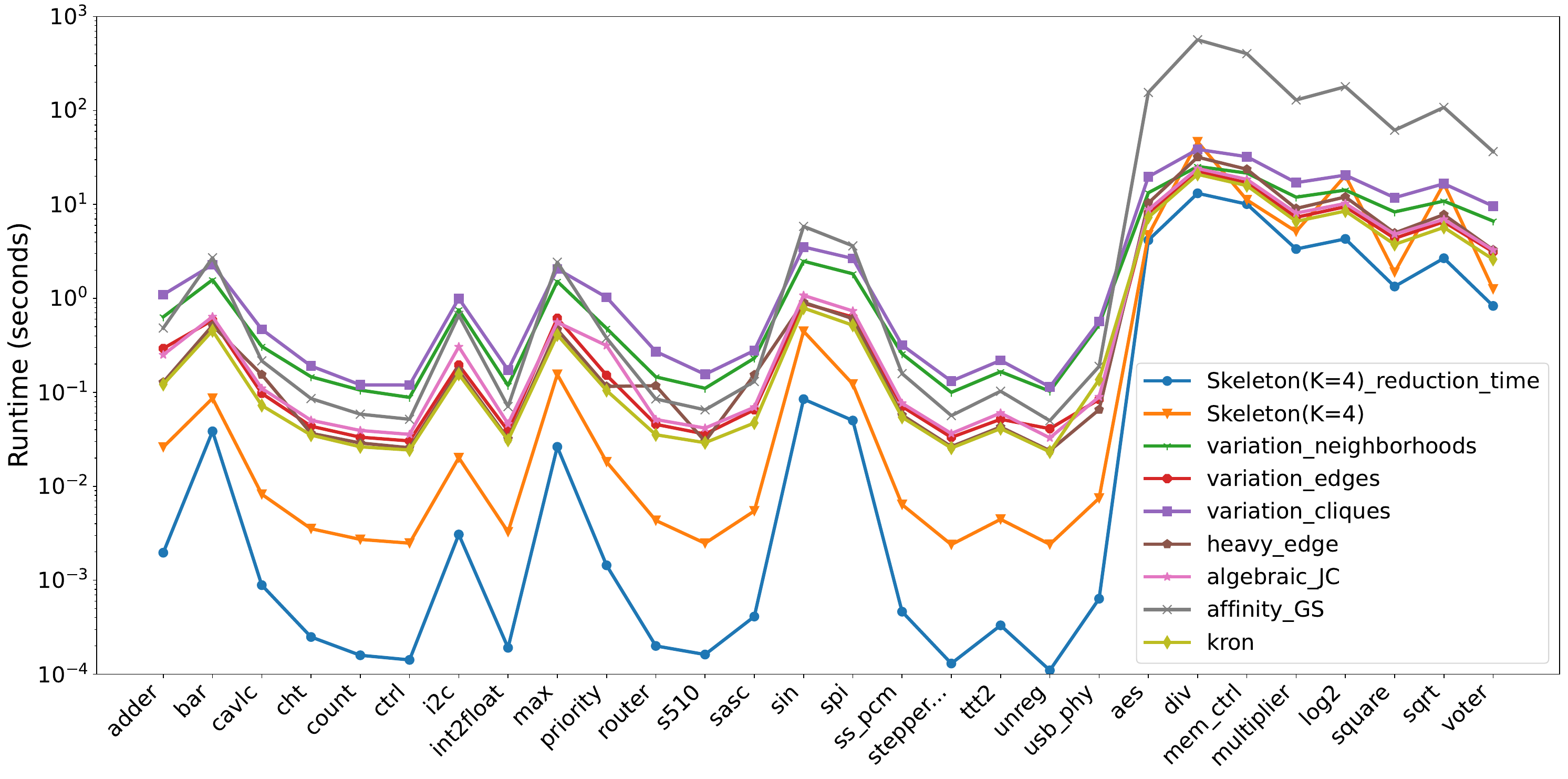}
    }
    \caption{Runtime analysis.}
    \label{fig:task:compression:runtime}
\end{figure}

\cref{fig:task:compression:ratio} illustrates the compression ratios achieved across selected designs with $K$ from 1 to 10;. 
This diagram reveals that increases in the fanin constraint \(K\) correspond to reductions in both graph size and depth. 
Specifically, a larger \(K\) permits greater merging of local nodes, enhancing the compression effect.

\cref{fig:task:compression:runtime}(a) illustrates the runtime performance of Boolean skeleton methods for $ K = 4, 7, $ and $ 10 $ across the selected designs. 
Each bar is composed of preparation time and node reduction time, highlighting that preparation time constitutes the predominant portion of the total runtime. 
Additionally, the runtime for a given design remains relatively consistent across the different $ K $ values.
\cref{fig:task:compression:runtime}(b) presents a runtime comparison between the Boolean skeleton methods with $ K = 4 $ and the baseline graph coarsening methods. 
This comparison demonstrates that the Boolean skeleton method consistently outperforms all baseline approaches in terms of runtime.

\subsubsection{Analysis 2: Boolean Network Classification}
\label{sec:task:classify}

The Boolean network classification task aims to evaluate the efficiency and robustness of the proposed skeleton method within the context of Boolean network analysis. 
The criterion for classification is that Boolean networks with the same functionality belong to the same class.
This task examines two primary aspects: (1) the extent to which the skeleton enhances computational efficiency for a specific Boolean network type, and (2) the method’s ability to maintain consistent classification performance across diverse Boolean network representations of a given design.

\paragraph{Dataset Generation Flow.}

\begin{figure}[t]
    \centering
    \includegraphics[width=1\linewidth]{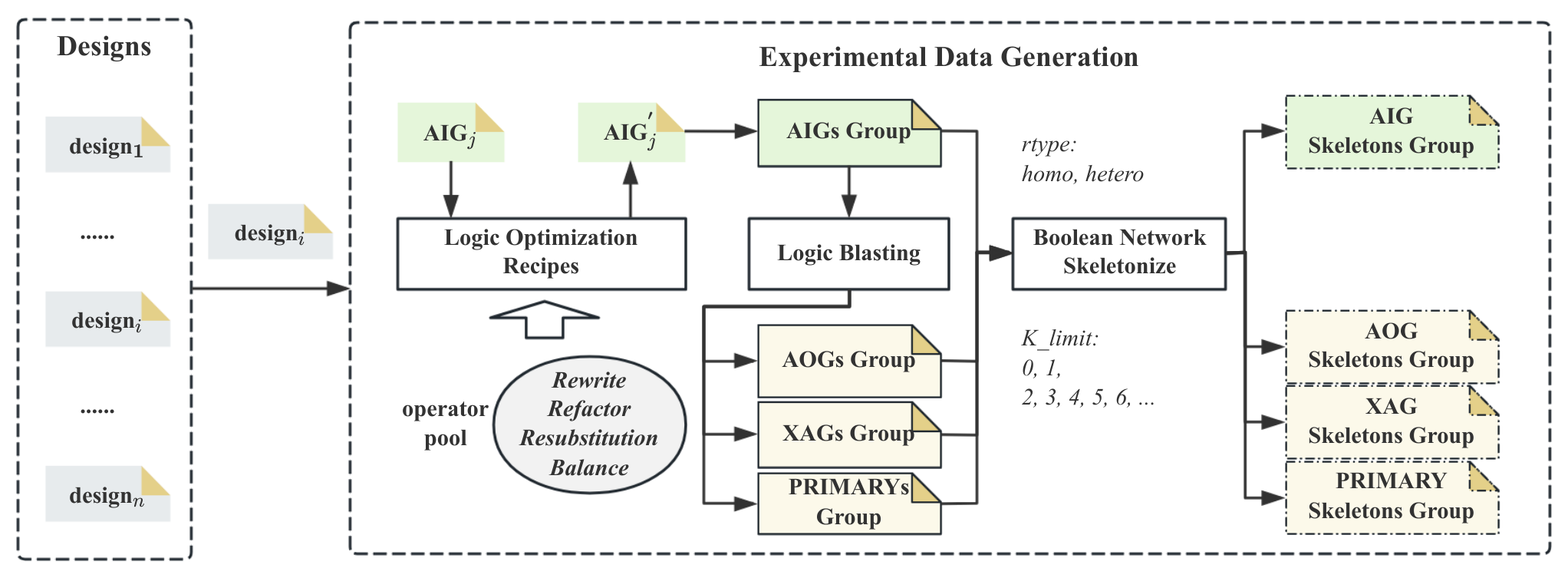}
    \caption{Boolean network classification dataset flow.}
    \label{fig:task:classify:datagen}
    \vspace{-1em}
\end{figure}

\cref{fig:task:classify:datagen} shows the Boolean network classification dataset generation flow.
For each design, the process begins by loading it into an And-Inverter Graph (AIG)-based Boolean network. 
Subsequently, logic optimization techniques are applied to produce a diverse set of AIG variants with a size of 1,000.
Each optimization sequence is randomly and redundantly generated by the operators in \{rewrite, refactor, balance, resub\} from berkeley-abc~\cite{BerkeleyABC} tool with a max size of 10.
These variants are then converted into other types of Boolean networks by logic blasting~\cite{openlsdgf}. 
Finally, the proposed Boolean network skeleton methods with different $K$ are employed to extract the corresponding skeleton from each network.

\paragraph{GCN-based Model.}

\begin{figure}[t]
    \centering
    \includegraphics[width=0.45\textwidth]{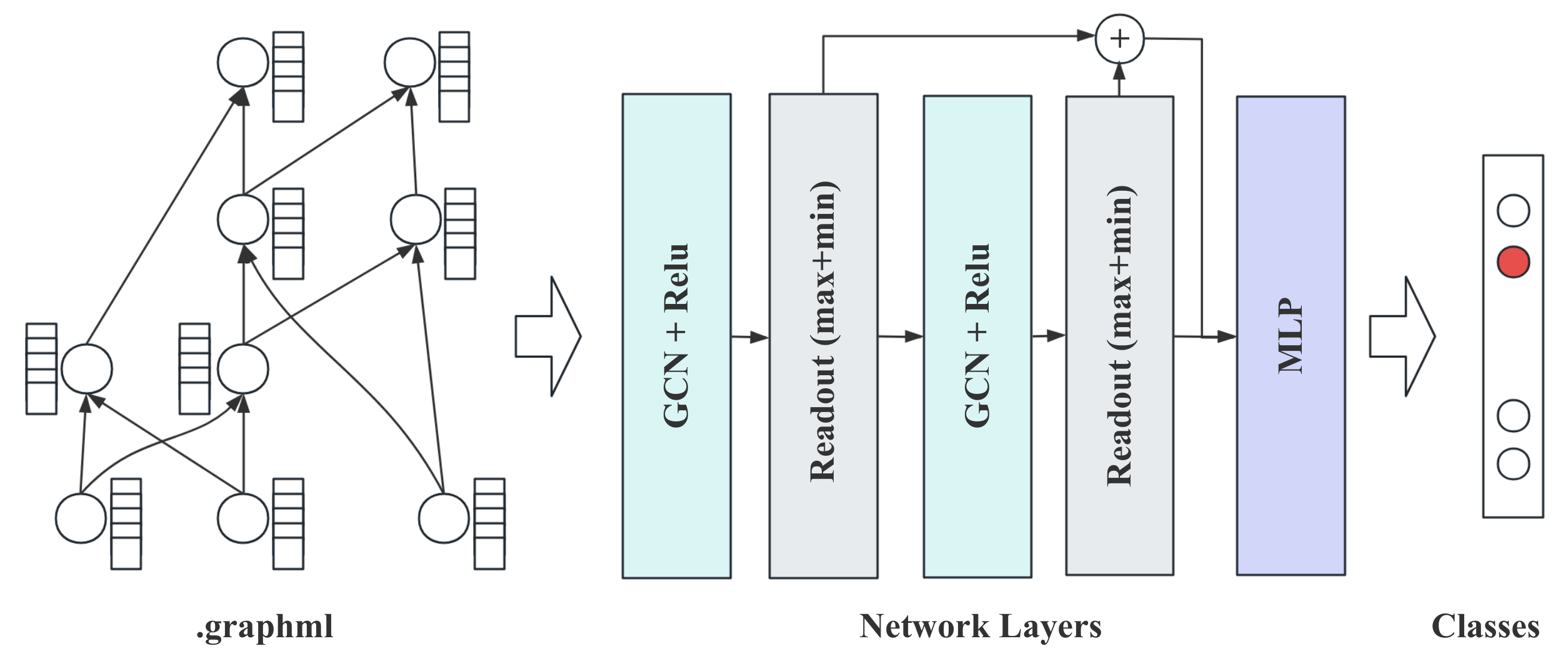}
    \caption{The GCN-based model of the graph classification.}
    \label{fig:classification_architecture}
    \vspace{-1em}
\end{figure}

\begin{figure}[t]
\centering
    \subfigure[Comparison between different methods over AIGs.]{
    \includegraphics[width=0.48\textwidth]{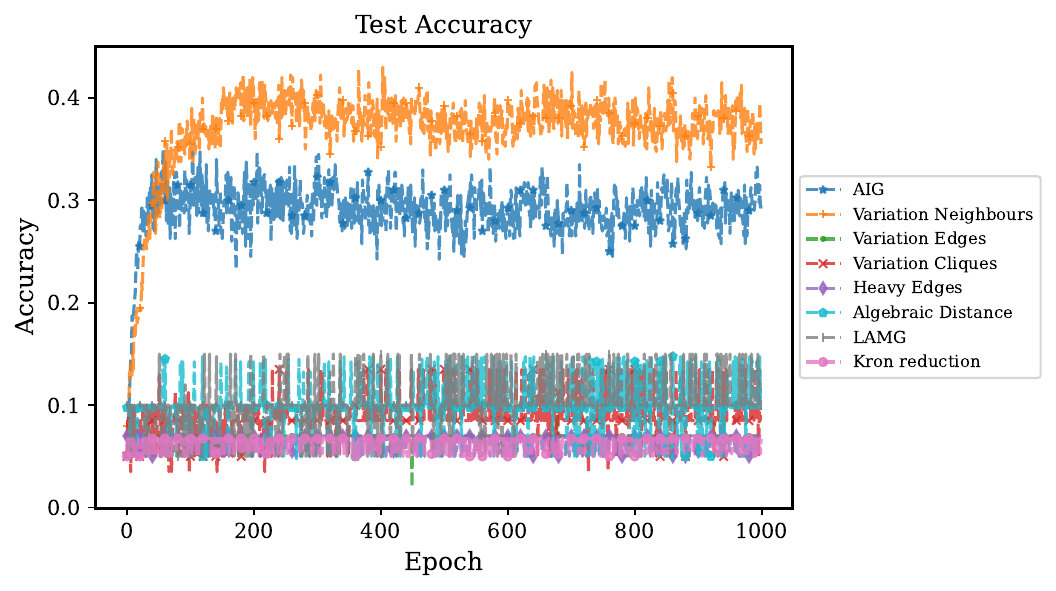}
    }
    \hfill
    \subfigure[Comparison between \textit{BoolSkeleton} with differnt $K$.]{
    \includegraphics[width=0.46\textwidth]{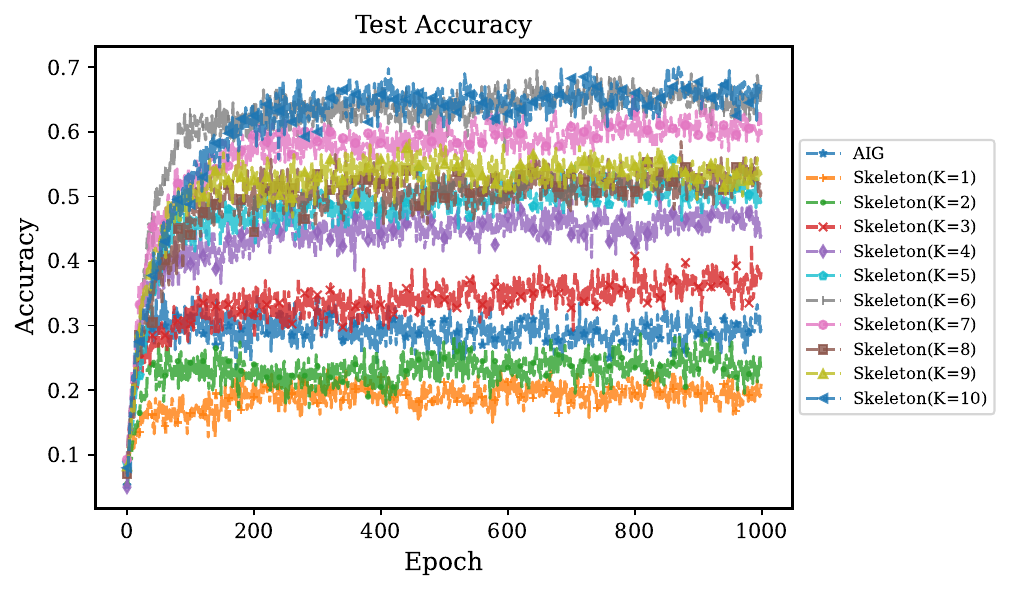}
    }
    \caption{Test accuracy comparison for classification task.}
    \label{fig:task:classify:exper:acc}
\end{figure}

The graph convolutional network (GCN) architecture employed for classification is depicted in \cref{fig:classification_architecture}. 
The input consists of DAGs in \textit{GraphML} format for the \textit{torch\_geometry}~\cite{pyg} package, augmented with node features representing circuit properties (node types by one-hot embedding). 
This model comprises \textit{``GCN + ReLU''} layers, followed by a graph embedding \textit{``Readout''} stage, and concludes with an \textit{``MLP + Softmax''} layer to predict its class label.
The training process spans 1,000 epochs, with a learning rate of 0.001, and employs cross-entropy loss as the loss function.

\paragraph{Evaluation.}

We conduct this experiment across two dimensions: (1) efficiency improvements for a specific Boolean network type, and (2) consistency across heterogeneous Boolean network representations.

\paragraph{Evaluation 1: The efficiency of the skeleton for one specific Boolean network type.}
\cref{fig:task:classify:exper:acc}(a) shows that the test accuracies of the compared baseline are worse than the original AIG dataset~(which means do nothing), excluding the ``Variation neighbors'' method.
\cref{fig:task:classify:exper:acc}(b) illustrates the test accuracy of AIG classification across different fanin constraints (\(K\)).
The results highlight how \textit{BoolSkelegon} improves generalization, with different \(K\) values striking a balance between complexity reduction and preservation of functionality-related features, as evidenced by elevated accuracy scores.

\begin{figure*}[h]
    \centering
    \subfigure[Skeleton(K=1)]{
    \includegraphics[width=0.322\textwidth]{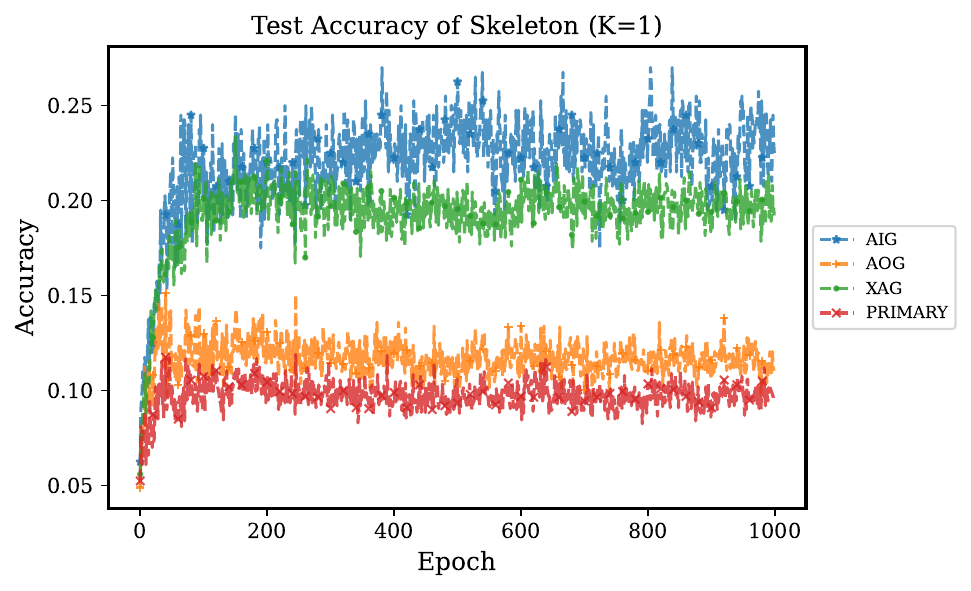}
    }
    \subfigure[Skeleton(K=4)]{
    \includegraphics[width=0.27\textwidth]{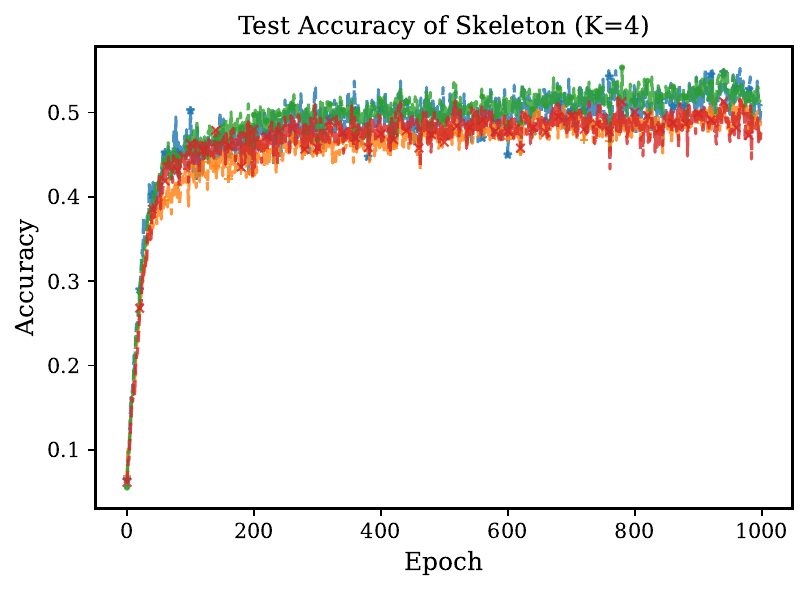}
    }
    \subfigure[Skeleton(K=8)]{
    \includegraphics[width=0.27\textwidth]{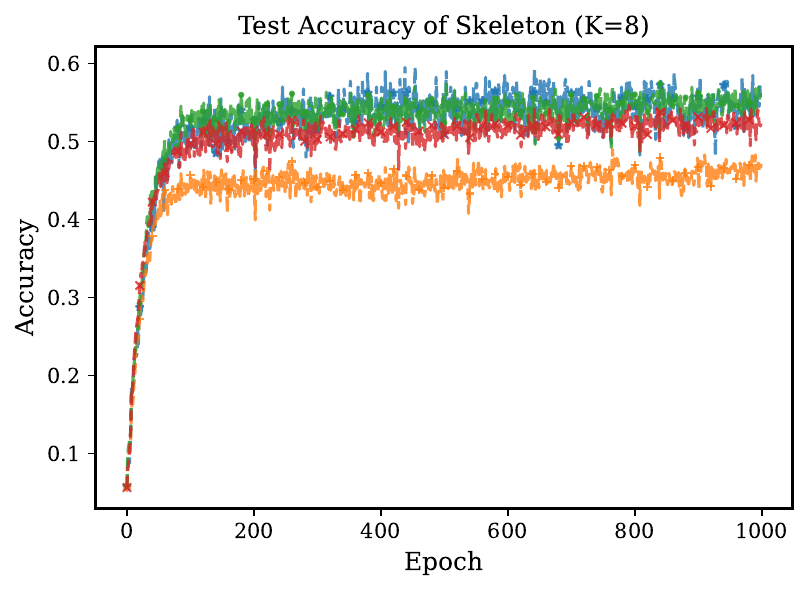}
    }
    \caption{Test accuracy comparison of the different Boolean network types while training with AIG.}
    \label{fig:task:classify:exper:consistency}
    \vspace{-1em}
\end{figure*}

\paragraph{Evaluation 2: The consistency of the skeleton across different types of Boolean network.}
This evaluation examines the robustness of the skeleton method across varied Boolean network representations, a critical factor for EDA applications where circuits may be expressed in multiple forms (AIG, AOG, XAG, PRIMARI\footnote{PRIMARY network consists of the basic primary logic gates.}). 
\cref{fig:task:classify:exper:consistency} compares classification accuracy for AIG-trained models applied to skeletonized networks with homogeneous pattern reductions and fanin limits ranging from \(K=1\) to \(K=10\). 
The consistent accuracy across these representations, despite structural differences, underscores the method’s adaptability. 
This resilience to variability enhances its potential for broad adoption in EDA workflows, where maintaining functional consistency across diverse circuit models is paramount.

\subsection{Task 1: Critical Path Analysis}
\label{sec:exper:task1}

In this task, we evaluate the similarity between the critical path\protect\footnote{The critical path is defined as the path with the maximum arrival time, as determined by the static timing analysis tool iSTA~\cite{iEDA}.} of the original Boolean network and that of its skeleton graph following technology mapping.

\begin{algorithm}[t]
\caption{Similarity Computation Between the Boolean Dependency Graph and Netlist}
\label{alg:similarity}
\begin{algorithmic}[1]
    \REQUIRE Boolean Network $\mathcal{C}^{logic}$, Boolean Dependency Graph $\mathcal{G}$, Gate-level Netlist $\mathcal{C}^{net}$
    \ENSURE The similarity $\alpha$
    \STATE $path_{critical} \gets \text{compute\_critical\_path}(\mathcal{G})$
    \STATE $pathes_{topk} \gets \text{compute\_topk\_timing\_path}(\mathcal{C}^{net}, k=3)$
    \STATE $region_{1} \gets \text{extract\_critical\_region}(\mathcal{C}^{logic}, [path_{critical}])$
    \STATE $region_{2} \gets \text{extract\_critical\_region}(\mathcal{C}^{logic}, pathes_{topk})$
    \STATE $\alpha \gets \text{compute\_similarity}(region_1, region_2)$
    \STATE {\bfseries Return:} $\alpha$
\end{algorithmic}
\end{algorithm}

\paragraph{Similarity Computation.}
\cref{alg:similarity} delineates the procedure for computing the similarity between the Boolean dependency graph and the gate-level netlist. 
The core idea is to map the critical path of the Boolean dependency graph and the top-\(3\) timing paths of the gate-level netlist onto the original Boolean network (lines 1–2). 
The critical regions associated with these paths are extracted using a two-stage labeling technique: (1) a top-down traversal assigns label \(a\), followed by (2) a bottom-up traversal assigns label \(b\) to the labeled nodes (lines 3–4). 
The similarity score \(\alpha\) is then calculated as the overlap between these two mapped regions (line 5). 
A higher \(\alpha\) indicates greater similarity between the critical paths of the skeleton graph and the netlist, suggesting that the \textit{BoolSkeleton} usually more effectively captures the critical path characteristics compared to the original Boolean network.

\begin{figure}[t]
    \centering
    \includegraphics[width=0.5\textwidth]{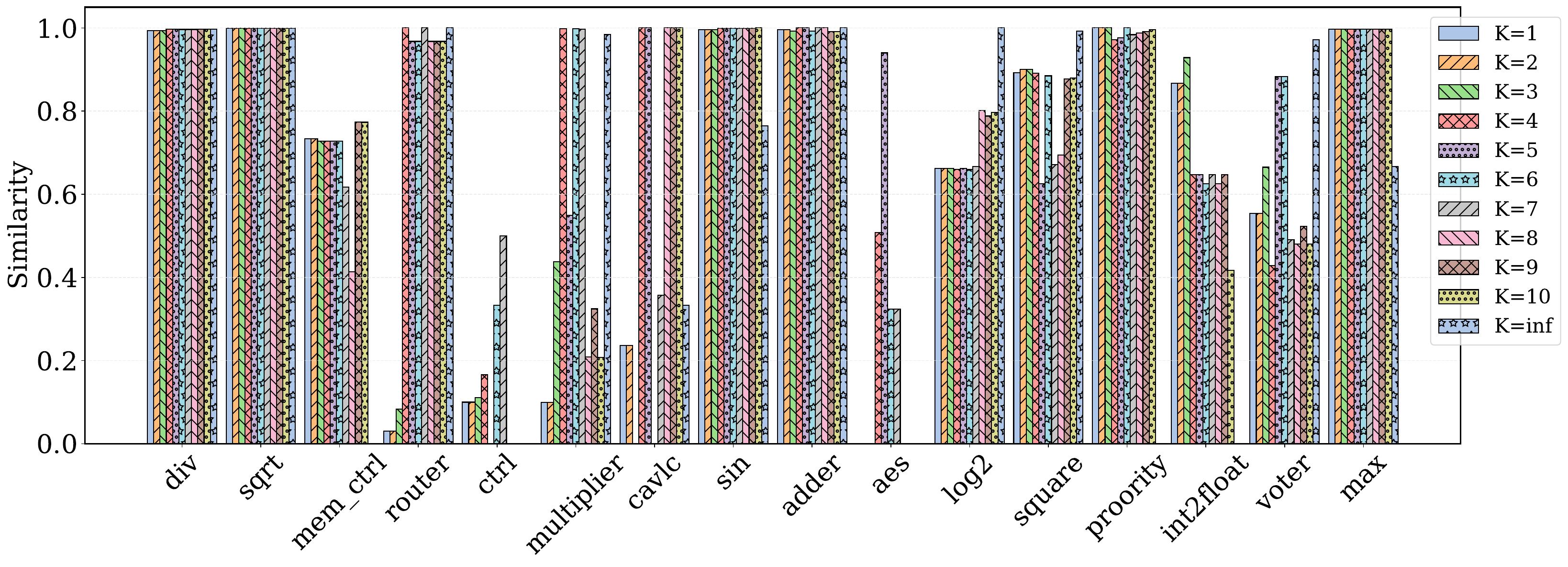}
    \caption{The similarity comparison between the skeleton graph with different constraints~($K$) and the gate-level netlist.}
    \label{fig:exper:task3:similarity}
    \vspace{-1em}
\end{figure}

\begin{figure}[t]
    \centering
    \includegraphics[width=0.5\textwidth]{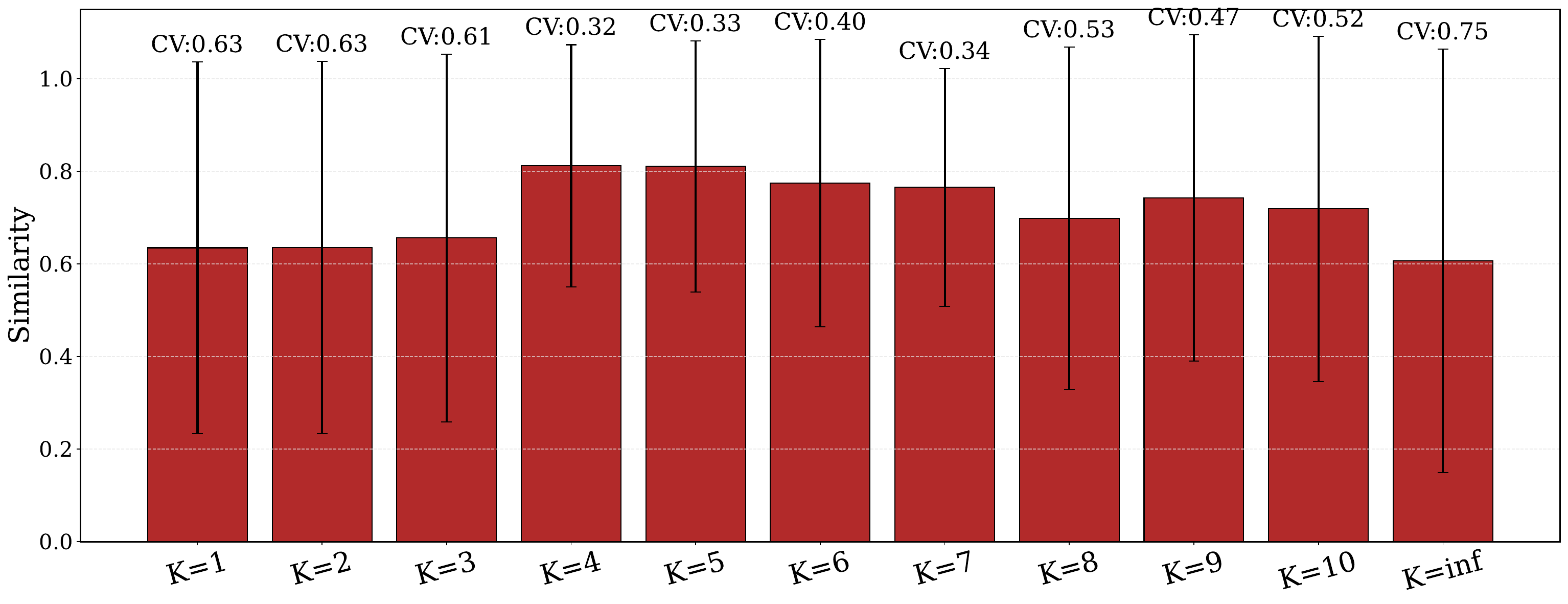}
    \caption{Comparison of average similarity (AVE) with standard deviation (STD, represented by error bars) and coefficient of variation (CV ($CV = \frac{AVE}{STD}$), values displayed above bars) across different methods.}
    \label{fig:exper:task3:error_bar}
    \vspace{-1em}
\end{figure}

\paragraph{Evaluation.}
\cref{fig:exper:task3:similarity} presents a similarity comparison between skeleton graphs, generated with varying fanin constraints (\(K\)), and the gate-level netlist for multiple designs. 
Most designs exhibit high similarity across different \(K\) values, indicating robust critical path preservation. 
However, certain designs—such as \textit{router}, \textit{ctrl}, and \textit{aes}—display varied similarity distributions, suggesting that critical path fidelity depends on an optimal balance of information retention. 
Excessive or insufficient skeletonization can degrade the representation of timing-critical paths.

According to the case study in \cref{sec:method:case}, the Boolean network classification task in \cref{sec:task:classify}, and the results in \cref{fig:exper:task3:similarity}, we infer the following: when the node count is excessively high, redundant paths emerge, complicating timing analysis; conversely, when the graph size is too low, significant information loss relative to the original circuit causes the critical path to deviate, increasing reliance on the original Boolean network for accurate timing prediction.

\cref{fig:exper:task3:error_bar} depicts the average similarity (AVE) with standard deviation (STD, represented by error bars) and coefficient of variation (CV, defined as \(CV = \frac{\text{STD}}{\text{AVE}}\), with values annotated above bars) across different \(K\) values for the skeleton method.
The results indicate that \(K=4\) and \(K=5\) yield superior performance across the test designs, balancing similarity and stability. 
From a cut enumeration perspective, a 4-feasible cut aligns closely with optimization and mapping strategies in technology mapping, explaining the method’s effectiveness.

\subsection{Task 2: Timing Prediction}
\label{sec:exper:task2}

\begin{figure}[t]
    \centering
    \includegraphics[width=0.5\textwidth]{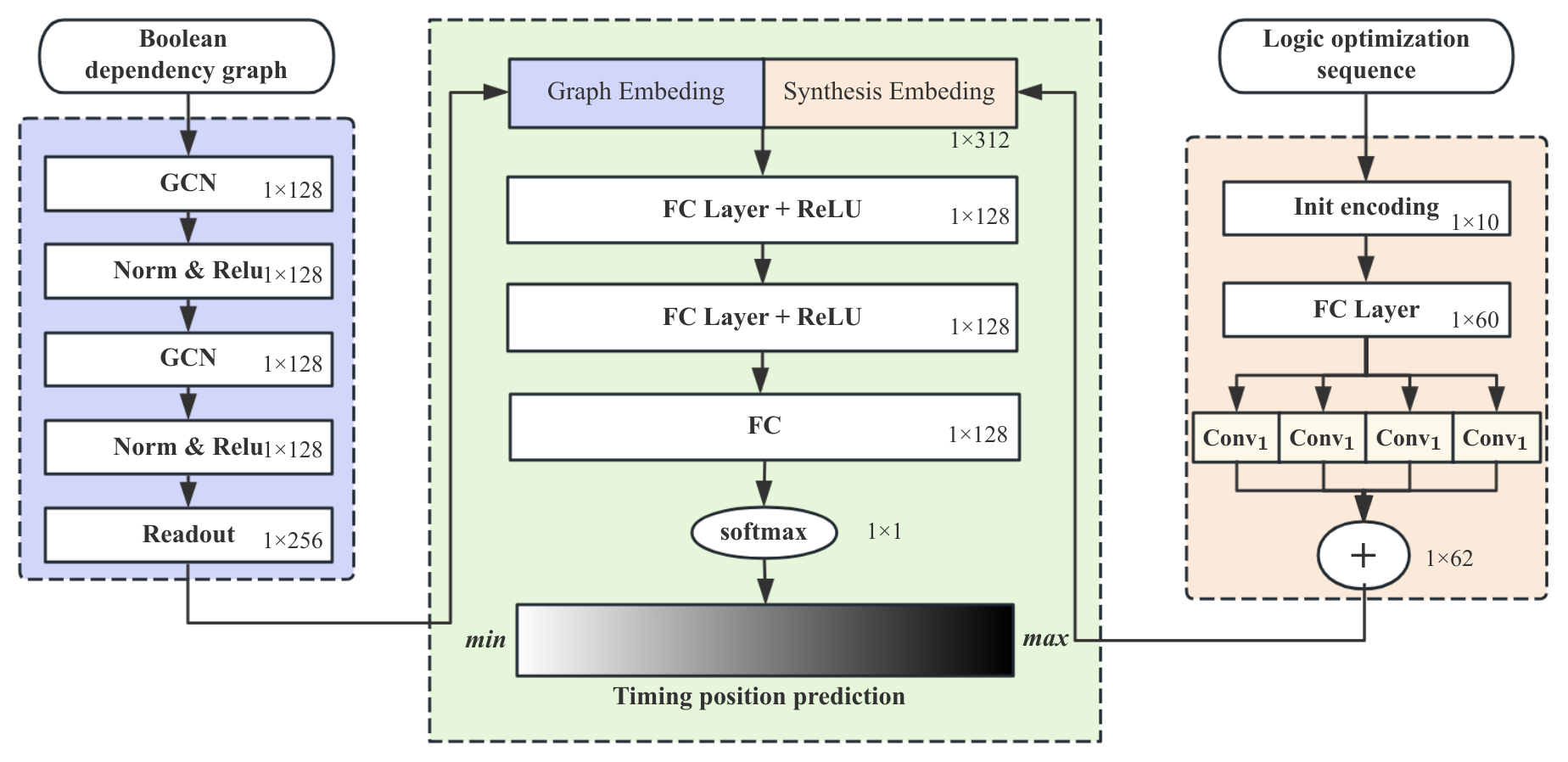}
    \caption{GCN-based timing prediction model.}
    \label{fig:task:timing_framework}
    \vspace{-2em}
\end{figure}

\begin{table*}[t]
    \centering
    \setlength{\tabcolsep}{1pt}
    \small
    \caption{The MAPE results comparison of the Timing Prediction task.}
    \begin{tabular}{ l  c  c c c  c c c c  c}
    \toprule
    \diagbox{\textbf{Design}}{\textbf{Graph}} & \textbf{AIG} & \makecell{ \textbf{Variation} \\ \textbf{neighbors} }$^{~\cite{skeleton_variations}}$  & \makecell{\textbf{Alberaic} \\ \textbf{Distance}}$^{\cite{skeleton_algebraic_distance}}$ & \makecell{\textbf{Kron} \\ \textbf{reduction}}$^{~\cite{skeleton_kron_reduction}}$  & \makecell{ \textbf{BoolSkeleton} \\(\textit{K=3})}  & \makecell{ \textbf{BoolSkeleton} \\(\textit{K=4})}  & \makecell{ \textbf{BoolSkeleton} \\(\textit{K=5})} &   \makecell{ \textbf{BoolSkeleton} \\(\textit{K=6})} & \textbf{STD.} \\
    \midrule
    \textit{adder}             & ~6.578    &  ~0.509   &  ~2.090   & ~0.668   &   ~0.524  &  \textbf{~0.004}   &  \textit{~0.351}   &  ~2.891   &  ~2.202  \\
    \textit{cavlc}             & ~1.612    &  \textit{~1.398}   &  ~5.155   & ~2.676   &   ~4.801  &  ~1.958   &  ~3.374   &  \textbf{~0.861}   &  ~1.590  \\
    \textit{cht}               & \textit{~4.937}    &  22.078  &  26.119  & ~7.290   &   10.533 &  \textbf{~3.014}   &  23.440  &  22.253  &  ~9.426  \\
    \textit{count}             & ~0.924    &  ~8.424   &  ~9.946   & ~4.601   &   \textit{~0.691}  &  ~1.780   &  ~1.262   &  \textbf{~0.091}   &  ~3.801  \\
    \textit{ctrl}              & ~5.524    &  ~2.343   &  15.126  & ~3.881   &   \textit{~1.887}  &  ~2.330   &  ~2.742   &  \textbf{~1.181}   &  ~4.544  \\
    \textit{i2c}               & ~1.882    &  ~3.868   &  ~1.620   & ~2.022   &   \textit{~0.053}  &  ~1.958   &  \textbf{~0.048}   &  ~1.947   &  ~1.219  \\
    \textit{int2float}         & ~2.407    &  ~3.153   &  \textit{~0.307}   & ~7.273   &   19.178 &  ~0.650   &  \textbf{~0.159}   &  ~1.521   &  ~6.428  \\
    \textit{max}               & ~0.597    &  ~2.603   &  ~2.734   & ~1.280   &   \textbf{~0.006}  &  ~0.158   &  ~0.152   &  \textit{~0.013}   &  ~1.145  \\
    \textit{priority}          & \textit{~1.204}    &  ~3.266   &  ~3.466   & 11.789  &   ~3.153  &  \textbf{~0.881}   &  ~2.268   &  ~3.405   &  ~3.429  \\
    \textit{router}            & ~5.011    &  \textbf{~0.167}   &  \textit{~0.642}   & ~6.195   &   14.029 &  ~9.426   &  ~8.220   &  ~2.243   &  ~4.758  \\
    \textit{s510}              & ~2.108    &  ~3.056   &  13.072  & ~4.244   &   16.001 &  ~0.483   &  \textbf{~0.011}   &  ~0.982                 &  ~6.098  \\
    \textit{sasc}              & ~0.635    &  ~2.465   &  12.354  & ~0.475   &   ~1.648  &  ~0.961   &  \textit{~0.436}   &  \textbf{~0.209}   &  ~4.092  \\
    \textit{sin}               & \textbf{~0.522}    &  ~2.483   &  \textit{~1.129}   & ~1.873   &   ~1.376  &  ~2.065   &  ~1.992   &  ~2.695   &  ~0.721  \\
    \textit{spi}               & 18.129    &  \textit{~0.587}   &  ~2.552   & ~5.424   &   ~2.358  &  ~4.915   &  \textbf{~0.033}   &  21.370  &  ~8.178  \\
    \textit{stepper}           & 15.392    &  14.050  &  14.324  & 12.001  &   \textbf{~1.400}  &  ~5.857   &  ~3.549   &  \textit{~2.997}   &  ~5.811  \\
    \textit{ttt2}              & 22.885    &  10.560  &  \textbf{~0.634}   & 38.760  &   12.759 &  ~4.346   &  \textit{~3.033}   &  31.543  &  14.070 \\
    \textit{unreg}             & ~1.908    &  ~7.607   &  33.813  & 13.733  &   ~1.506  &  \textbf{~0.456}   &  14.381  &  \textit{~0.645}   &  11.450 \\
    \textit{usb\_phy}          & ~8.607    &  \textbf{~1.991}   &  ~9.031   & ~4.384   &   ~8.428  &  ~3.817   &  \textit{~2.866}   &  ~5.115   &  ~2.782  \\
    \midrule
    \textbf{Train time (secs).}& 21896    &  1753   &  1262   & 1229   &  14498   &  12783   &   14659  &  14447  &  - \\
    \midrule
    \textbf{AVE.}              & 5.603     &  5.034   &  8.562   & 7.143   &  5.574   &  \textbf{2.503}   &  \textit{3.795}   &  5.665   &  - \\
    \textbf{TRIMAVE.}          & 4.841     &  4.273   &  7.500   & 5.583   &  5.072   &  \textbf{2.227}   &  \textit{2.804}   &  4.400   &  - \\
    \textbf{STD.}              & 6.619     &  5.662   &  9.411   & 8.812   &  6.221   &  \textbf{2.427}   &  \textit{6.061}   &  9.224   &  - \\
    \midrule
    \textbf{Impro~(AVE).}      & -        &  10.166   &  -52.797  & -27.470  &  ~0.526   &  \textbf{55.326}  &  \textit{32.267}  &  -1.091  &  - \\
    \textbf{Impro~(TRIMAVE).}  & -        &  11.738   &  -54.921  & -15.337  &  -4.767   &  \textbf{54.000}  &  \textit{42.075}  &  ~9.101   &  - \\
    \bottomrule
    \end{tabular}
    \label{tab:task:prediction}
\vspace{-1em}
\end{table*}

\begin{table}[tb]
    \centering
    \setlength{\tabcolsep}{2pt}
    \small
    \caption{ Performance comparison \nlwdel{between}\nlwadd{among} GCNs. }
    \begin{tabular}{ c c c c c }
        \toprule
        \multirow{2}{*}{\diagbox{\textbf{Metrics}}{\textbf{GCNs}}} & \multicolumn{2}{c}{\textbf{AIG}} & \multicolumn{2}{c}{\textbf{Skeleton(K=4)}} \\ \cmidrule(lr){2-3} \cmidrule(lr){4-5}
                                                  &  HuberLoss    &  TrainTime\nlwnew{(s)}    &  HuberLoss    &  TrainTime\nlwnew{(s)}  \\
        \midrule
        GIN$^{\cite{GIN}}$                        &  \textbf{0.252}    &  ~15829  &   \textbf{0.190}   &  12783        \\
        SAGE$^{\cite{SAGE}}$                      &  0.292    &  ~~7996  &   0.325   &  ~\textbf{7312}         \\
        HOGA$^{\cite{HOGA}}$                      &  0.430    &  183353  &   0.432   &  62041        \\
        BoolGebra$^{\cite{boolgebra}}$            &  0.422    &  ~~\textbf{6748}  &   0.253   &  13723       \\
        \bottomrule
    \end{tabular}
    \label{tab:exper:task4:gcn_comparison}
\end{table}

Quality of Results (QoR) prediction tasks~\cite{openabcd_animesh21, LOSTIN, MTLSO} are garnering growing attention, as they play a pivotal role in steering the optimization process within logic synthesis.
As highlighted by the critical path analysis tasks in \cref{sec:exper:task1}, the skeleton graph appears to offer a more effective representation of timing-critical paths compared to the original Boolean network. 
To further assess the efficacy of our approaches, we undertake a timing prediction task.

\paragraph{Dataset.}
The timing prediction dataset is constructed based on the classification dataset above. 
\nlwnew{We use ``map'' command of ``ABC''~\cite{BerkeleyABC} tool to perform technology mapping, and ``sky130'' is used as the standard cell library.}
Building on this foundation, the static timing analysis tool, iSTA~\cite{iEDA}, was employed to assess the "arrival time" of the netlist generated for each design across its respective optimization sequence. 

\paragraph{GCN-Based Model.}
\cref{fig:task:timing_framework} illustrates the GCN-based timing prediction model tailored for logic optimization. 
Training data is generated per the dataset flow in \cref{fig:task:classify:datagen}, with each sample comprising a Boolean dependency graph and its associated logic optimization sequence. 
The model aims to predict the timing quality of a Boolean network under a given optimization action. 
It consists of three core components: (1) a graph embedding module, employing a two-layer GCN for node feature aggregation followed by a readout layer (mean + max) to produce a graph-level feature vector; (2) a synthesis embedding module, encoding the discrete optimization sequence using four convolutional filters with kernel sizes \{1×14, 1×15, 1×16, 1×17\}; and (3) a timing quality prediction module, utilizing fully connected layers and a softmax activation to estimate the timing quality as a probability distribution across normalized timing range.
The training process spans 500 epochs, with a learning rate of 0.001, and employs Huberloss~\cite{huberloss_math1964_math} as the loss function.

\paragraph{Evaluation.}
\cref{tab:exper:task4:gcn_comparison} shows performance comparison results of various GCNs applied to two graph types: the original AIG and the skeleton graph by the proposed \textit{BoolSkeleton} method with $K=4$.
It suggests that the GIN-based timing prediction model surpasses all the other models in the loss evaluation, although the training time is comparatively higher.

\cref{tab:task:prediction} reports the Mean Absolute Percentage Error (MAPE) for timing prediction across various graph-based methods applied to the dataset in \cref{tab:dataset}, with lower MAPE values signifying higher accuracy. 
Among the evaluated approaches, \textit{BoolSkeleton} with \(K=4\) consistently outperforms others, achieving the lowest average MAPE (AVE: 2.503) and trimmed mean (TRIMAVE\protect\footnote{TRIMAVE is the average excluding the maximum and minimum values.}: 2.227), surpassing the baseline AIG by 55.3\% and 54.0\%, respectively. 
In contrast, methods like \textit{Alberaic Distance} (AVE: 8.562) and \textit{Kron reduction} (AVE: 7.143) exhibit higher errors, underperforming AIG by 52.797\% and 27.470\%, respectively. 
The standard deviation (STD) across designs further underscores the stability of \textit{BoolSkeleton} (\(K=4\): 2.427, lowest), while per-design STD (rightmost column) reveals circuit-specific variability, with designs like \textit{ttt2} (14.070) showing greater sensitivity to skeletonization than \textit{sin} (0.721). 
These results suggest that \textit{BoolSkeleton} effectively balances accuracy and robustness, offering a superior approach for timing prediction.

\section{Discussion}
\label{sec:discussion}
In this section, we will discuss the advantages and limitations of the proposed \textit{BoolSkeleton}.

\paragraph{Advantages.}
\textit{BoolSkeleton} is designed from the functionality viewpoint, enabling it to distill a consistent structural representation across diverse Boolean network variants. 
This method significantly enhances performance in functionality-related tasks of the Boolean networks by preserving essential Boolean dependency properties within structural variability.
Additionally, the node-level fanin-limited homogeneous pattern reduction process efficiently extracts a coarse-grained skeleton from the Boolean network, facilitating the capture of high-dimensional representations.
The parameter $K$ of the pattern reduction operator can also control the coarsening ratio of the Boolean dependency graph.
Experimental results validate these advantages, showcasing superior outcomes across multiple downstream applications.
Furthermore, it holds the potential to improve functionality- and profile-related tasks, for example, it can help the matches between ports in Boolean matching~\cite{matching_date10_npnp}, and it can enhance the optimization space exploration by PPA prediction~\cite{rtlppa_iccad2022_ppa, rtldelay_iccad2022_ppa, masterrtl_iccad23_ppa}, etc.

\paragraph{Limitations}
There is a reduction ratio for the graph coarsening method, and the parameter $K$ of the pattern reduction operator is used to control the reduction ratio in $BoolSkeleton$.
While this flexibility of $K$ also expands the scope of analytical tasks, determining an optimal \(K\) value for specific Boolean networks and tasks poses a challenge.
Consequently, an adaptive $BoolSkeleton$ tailored to individual tasks is also essential.
For large-scale circuits, a potential solution is to combine with ``BoolSkeleton'' through graph partitioning algorithms, which is also consistent with the nature of large circuits that are generally composed of sub-modules.


\section{Conclusion}
\label{sec:conclusion}

In this work, we proposed \textit{BoolSkeleton} to address the challenge of the coarse-grained Boolean network representation in logic synthesis.
\textit{BoolSkeleton} is tailored to balance their static functionality and dynamic structural variability.
Through an in-depth analysis, we identified critical attributes: Boolean dependency, reachability, reconvergence, and the conflict between static functionality and dynamic DAG. 
Leveraging these insights, \textit{BoolSkeleton} employs a preprocessing phase to transform Boolean networks into dependency graphs, followed by an iterative node-level homogeneous pattern reduction process. 
This approach preserves coarse-grained functionality-related information while simplifying fine-grained topological structures, addressing the limitations of existing GNN-based methods that overly rely on local node embeddings.
The experimental results of four downstream tasks demonstrate the efficacy of \textit{BoolSkeleton}.
Future work will focus on the integration with advanced machine learning platforms or frameworks to \textit{BoolSkeleton}'s applicability across a broader range of Boolean network representation tasks.

\bibliographystyle{IEEEtran}
\bibliography{skeleton}

\vfill

\end{CJK}
\end{document}